\renewcommand{\section}{\@startsection%
{section}%
{1}%
{0em}%
{1.7em}%
{1.2em}%
{\normalfont\large\centering\bfseries}}
\renewcommand{\@seccntformat}[1]%
{\csname the#1\endcsname.\hspace{0.5em}}
\numberwithin{equation}{section}
\newtheorem{theorem}{Theorem}[section]
\newtheorem{lemma}{Lemma}[section]
\newtheorem{corollary}{Corollary}[section]
\theoremstyle{definition}
\newtheorem{definition}{Definition}
\newtheorem{remark}{Remark}
\newtheorem*{notation}{Notation}
\newcommand{\inner}[2]{\left\langle#1,#2\right\rangle}
\newcommand{\cc}[1]{\overline{#1}}
\newcommand{\reals}{\mathbb{R}}
\newcommand{\nats}{\mathbb{N}}
\newcommand{\complex}{\mathbb{C}}
\newcommand{\convergesto}[1]{\xrightarrow[#1\to\infty]{}}
\DeclareMathOperator{\dom}{dom}
\DeclareMathOperator{\ran}{ran}
\DeclareMathOperator{\dist}{dist}
\DeclareMathOperator{\diag}{diag}
\begin{document}
\begin{titlepage}
\title{Green matrix estimates of block Jacobi matrices I:
  Unbounded gap in the essential spectrum
\footnotetext{%
Mathematics Subject Classification(2010):
39A22,  
47B36, 
33E30  
}
\footnotetext{%
Keywords:
Block Jacobi operators;
Generalized eigenvectors;
Decay bounds.
}
}
\author{
\textbf{Jan Janas}
\\
\small Institute of Mathematics\\[-1.6mm]
\small Polish Academy of Sciences (PAN)\\[-1.6mm]
\small Ul. Sw. Tomasza 30, 31-027\\[-1.6mm]
\small Krakow, Poland\\[-1.6mm]
\small \texttt{najanas@cyf-kr.edu.pl}
\\[2mm]
\textbf{Sergey Naboko}
\\
\small Department of Mathematical Physics\\[-1.6mm]
\small Institute of Physics\\[-1.6mm]
\small St. Petersburg State University\\[-1.6mm]
\small Ulyanovskaya 1, St. Petersburg 198904, Russia\\[-1.6mm]
\small \texttt{sergey.naboko@gmail.com}
\\[2mm]
\textbf{Luis O. Silva}
\\
\small Departamento de F\'{i}sica Matem\'{a}tica\\[-1.6mm]
\small Instituto de Investigaciones en Matem\'aticas Aplicadas y en Sistemas\\[-1.6mm]
\small Universidad Nacional Aut\'onoma de M\'exico\\[-1.6mm]
\small C.P. 04510, Ciudad de M\'exico\\[-1.6mm]
\small \texttt{silva@iimas.unam.mx}
}
\date{}
\maketitle
\vspace{-4mm}
\begin{center}
\begin{minipage}{5in}
  \centerline{{\bf Abstract}} \bigskip This work deals with decay
  bounds for Green matrices and generalized eigenvectors of block
  Jacobi matrices when the real part of the spectral parameter lies in
  an infinite gap of the operator's essential spectrum. We consider
  the cases of commutative and noncommutative matrix entries
  separately. An example of a block Jacobi operator with
  noncommutative entries and nonnegative essential spectrum is given
  to illustrate the results.
\end{minipage}
\end{center}
\thispagestyle{empty}
\end{titlepage}
\section{Introduction}
\label{sec:intro}

In this work, we consider block Jacobi operators acting in
$\mathcal{H}=l_2(\nats,\mathcal{K})$, the space of
square-summable sequences whose elements lie in a Hilbert space
$\mathcal{K}$ (see the precise definition in the next section). As in
the case of scalar Jacobi operators, block Jacobi operators are
associated with a second order difference equation, but instead of
having scalar coefficients, this equation has operator
coefficients (see~\eqref{eq:difference-expr}). These operators, $A_n$
and $B_n=B_n^*$ ($n\in\nats$), are the entries of a block Jacobi
matrix (see \eqref{eq:block-jm}). The class of block Jacobi operators
under consideration, which is generically denoted by $J$, is 
 such that the
operators $A_n$ and $B_n$ are bounded and defined on the whole Hilbert
space $\mathcal{K}$, that is, $A_n,B_n\in B(\mathcal{K})$ for all
$n\in\nats$. Additionally, we require that $J$ is self-adjoint and
semi-bounded with its essential spectrum
lying inside the interval $[b,+\infty)$. Under these assumptions, we
provide estimates for the decay of the matrix entries of $(J-\lambda
I)^{-1}$, i.\,e. the entries of the Green matrix of $J$, when
$\Re\lambda<b$. This in particular gives estimates of the so-called
generalized eigenvectors. Moreover, when
$\sigma(J)\cap(-\infty,b)\ne\emptyset$, we show that these estimates
apply for the eigenvectors of $J$ corresponding to eigenvalues below
$b$.

Similar questions for estimates of the generalized eigenvectors of
scalar Jacobi matrices have been addressed in previous papers
\cite{MR2480099,MR3028179} (references for results on this matter,
preceding the ones of \cite{MR2480099,MR3028179}, can be found in
these papers). In \cite{MR2480099}, the case where the scalar Jacobi
matrix $J$ satisfies the operator inequality $J\ge bI$ is studied. In
the other work \cite{MR3028179}, the presence of a bounded discrete
spectrum of $J$ is allowed. In
  \cite{MR2480099,MR3028179}, there is a refinement of the
  Combes-Thomas method for obtaining estimates of the Green function
  which provides sharp coefficient estimates and establishes that the
  bounds depend inversely on the growth of the off-diagonal
  entries. In this work, on the basis of the methods developed in
  \cite{MR2480099,MR3028179}, we establish estimates that include as a
  particular case the bounds found earlier in
  \cite{MR2480099,MR3028179}. The growth of the off-diagonal
  operators refers either to the growth of the operator norm (see
  Theorems~\ref{thm:infinite-interval} and
  \ref{thm:infinite-interval-in-spectrum}) or to the absolute value of
  the operator (see
  Theorem~\ref{thm:infinite-interval-commutation}). The latter case
  permits a wide range of interesting examples.

It should be mentioned that the
semi-boundedness from below of $J$ is crucial as it already was in
the scalar case. Indeed, in \cite{MR2579689}, an example of a scalar
Jacobi matrix was produced with essential spectrum covering the
interval $(0,\infty)$ and negative spectrum being discrete and
accumulating to $-\infty$. For this example, the explicitly calculated
asymptotics of generalized eigenvectors for $\lambda<0$ does not
satisfy the estimates of
Theorem~\ref{thm:infinite-interval-in-spectrum}. Namely, the
asymptotics found in \cite{MR2579689} contains non-removable
information on the main diagonal of the scalar Jacobi matrix while the
estimates given by Theorem~\ref{thm:infinite-interval-in-spectrum} do
not depend on the main diagonal of the block Jacobi matrix.
The fact that the asymptotic behavior of generalized eigenvectors of a
semi-bounded Jacobi operator is independent of its main diagonal was
known in the scalar case
\cite{MR3028179}. Theorems~\ref{thm:infinite-interval} and
\ref{thm:infinite-interval-in-spectrum} extend that result to block
Jacobi operators.

The growth estimates of generalized eigenvectors found in this work
provide a generalization of Combes-Thomas type estimates
\cite{MR1441595,MR0391792} applicable to various discrete random
models, including the Anderson model, with matrix potentials (an
accessible and detailed survey on the matter in the scalar case is
found in \cite{MR2509110} and for Combes-Thomas estimates see Chapter
11 there). It is pertinent to mention here that there are various works dealing
with random block-type operators, for instance
\cite{MR3210957,MR3077279,MR969209}, and some of them treating the
problem of 
localization \cite{MR3210957,MR3077279}. In the context of our work,
\cite[Lem.\,5.7]{MR3077279} is of particular relevance since it gives
a Combes-Thomas type estimate for random block operators.

Apart from random block-type operators, the estimates of generalized
eigenvectors in the block Jacobi matrix case may have other
interesting applications. One of them is related to the investigation
of the spectral phase transition phenomena of the second kind (see
\cite{MR2480099} for a case of one-threshold transition) for discrete
models with matrix entries. Block Jacobi matrices permit more freedom
to construct models which exhibit multi-threshold spectral phase
transitions.

  Spectral properties of block Jacobi operators have
  also been studied in \cite{MR2591198} which carries out averaging of the
  spectral measure over boundary conditions. More recently,
  \cite{MR3767432} developed a framework which simplifies the general
  local Green function relations found in \cite{MR2184188} and
  illustrates the power of the transfer matrix method.

In this work we deal exclusively with the case of
  unbounded gap in the essential spectrum.
The case of a bounded gap in the essential spectrum of block Jacobi
operators requires a different technique and will be considered in a
forthcoming paper.

The following is a summary of the
paper. Section~\ref{sec:block-jacobi} contains a short survey on the
basics of the theory of block Jacobi
matrices. Section~\ref{sec:estim-gener-eigenv} presents the main result
of the paper (Theorem~\ref{thm:infinite-interval}) which gives the
estimates of the Green matrix entries
decay. Theorem~\ref{thm:infinite-interval-in-spectrum} also shows
estimates of the eigenvector's decay for the eigenvalue
$\lambda<b$. Note that for such $\lambda$, the Green matrix, i.\,e.
the resolvent of $J$, is not defined, but this problem can be easily
overcome by a proper ``small change'' of the matrix $J$.
Section~\ref{sec:case-comm-entr}
deals with the special case of commuting entries. In this case, more
detailed estimates, counting the matrix character of the entries, are
obtained (Theorem~\ref{thm:infinite-interval-commutation} and
Corollary \ref{cor:norm-to-infty-commutation}). This contrasts the
results of Section~\ref{sec:estim-gener-eigenv}, where just the norm
of the entries are involved in the
estimates. Section~\ref{sec:an-example-noncommuting} presents an
example of a block Jacobi matrix with a $2\times 2$ matrix entries
with nonnegative essential spectrum. The application of
Theorem~\ref{thm:infinite-interval-in-spectrum} and a 
heuristic analysis of solutions (the Levinson type asymptotics form)
give a sort of arguments for proving the sharpness of
Theorem~\ref{thm:infinite-interval-in-spectrum}. We hope that the true
block matrix example of Section 5 is also of independent interest. It
exhibits the matrix character role of the entries.





\section{Block Jacobi matrices}
\label{sec:block-jacobi}
\begin{notation}
  \label{not:basic-notation}
  The following notation is used throughout this work.
\begin{enumerate}[(I)]
\item
 \label{not-1}
  By $\mathcal{H}$, we denote a separable infinite dimensional
  Hilbert space. This space admits the decomposition
  \begin{equation*}
    \mathcal{H}=\bigoplus_{m=1}^\infty\mathcal{K}_m\,,
  \end{equation*}
  where, for all $m\in\nats$, $\mathcal{K}_m=\mathcal{K}$ and
  $\mathcal{K}$ is either an infinite or finite dimensional subspace
  of $\mathcal{H}$. Therefore, $\mathcal{K}$ is either unitarily
  equivalent to $l_2(\nats)$ or $\complex^d$ with a fixed
  $d\in\nats$. Here and throughout the text, $l_2(\nats)$ stands for
  the space of infinite, square-summable complex sequences. The
  Hilbert space $\mathcal{H}$ is usually denoted by $l_2(\nats,\mathcal{K})$.

\item\label{not-2} The symbol $\norm{\cdot}$ is used to denote the norm in
  $\mathcal{H}$, while the norm in $\mathcal{K}$ is denoted by
  $\norm{\cdot}_{\mathcal{K}}$. $B(\mathcal{H}$) and $B(\mathcal{K})$
  denote the spaces of bounded linear operators defined on the whole
  space $\mathcal{H}$ and $\mathcal{K}$, respectively. The norms in
  $B(\mathcal{H})$ and $B(\mathcal{K})$ are denoted by
  $\norm{\cdot}_{B(\mathcal{H})}$ and $\norm{\cdot}_{B(\mathcal{K})}$,
  respectively.

\item \label{not-3}
A vector $u$ in
$\mathcal{H}$ can be written as a sequence
\begin{equation}
  \label{eq:sequence}
  u=\{u_m\}_{m=1}^\infty\,,\qquad
u_m\in\mathcal{K}_m\,,
\end{equation}
where $\sum_{m=1}^\infty\norm{u_m}_{\mathcal{K}}^2<+\infty$.
We also use the notation
\begin{equation*}
  u=(u_1,u_2,u_3,\dots)^{\rm\textsf{T}}
\end{equation*}

\item \label{not-4} Throughout this work, we use $I$ to denote the
  identity operator in the spaces $\mathcal{H}$ and $\mathcal{K}$
  since it will cause no confusion to use the same letter for these
  operators.  The orthogonal projector in $\mathcal{H}$ onto the
  subspace $\mathcal{K}_m$ is denoted by $P_m$ while the symbol
  $\widetilde{P}_M$ stands for the orthogonal projector onto
  $\bigoplus_{m=1}^M\mathcal{K}_m$.
\item \label{not-5}
Given a closed, densely defined operator $A$ in a Hilbert space, we
denote by $\abs{A}$ the operator $(A^*A)^{1/2}$.
\end{enumerate}
\end{notation}

Let us turn to the definition of block Jacobi operators. For any
sequence (\ref{eq:sequence}), consider the second order difference
expressions
\begin{subequations}
  \label{eq:difference-expr}
\begin{align}
   \label{eq:difference-recurrence}
  (\Upsilon u)_k&:= A_{k-1}^* u_{k-1} + B_k u_k + A_ku_{k+1}
  \quad k \in \mathbb{N} \setminus \{1\},\\
   \label{eq:difference-initial}
   (\Upsilon u)_1&:=B_1 u_1 + A_1 u_2\,,
\end{align}
\end{subequations}
where $B_k=B_k^*, A_k\in B(\mathcal{K})$ for any $k\in\mathbb{N}$.
\begin{definition}
 \label{def:j-nought}
  In $\mathcal{H}$, define the operator $J_0$ whose
  domain is the set of sequences (\ref{eq:sequence}) having a finite number of
  non-zero elements and is given by $J_0f:=\Upsilon
  f$. Since
$J_0$ is symmetric (therefore closable), one can consider its closure
which is denoted by $J$.
\end{definition}

We have defined the operator $J$ so that the block tridiagonal matrix
\begin{equation}
  \label{eq:block-jm}
\begin{pmatrix}
B_1&A_1&0&0&\cdots\\
A_1^*&B_2&A_2&0&\\
0&A_2^*&B_3&A_3&\ddots\\
0&0&A_3^*&B_4&\ddots\\
\vdots&&\ddots&\ddots&\ddots
\end{pmatrix}
\end{equation}
can be regarded as the matrix representation of the operator $J$
(see \cite[Sec. 47]{MR1255973} where a definition of the matrix
representation of an unbounded symmetric operator is given and
also \cite{MR2379691} and \cite{arXiv:1705.06138} for general
questions on block Jacobi operators).

In this work, we impose conditions on the sequences
$\{A_m\}_{m=1}^\infty$ and $\{B_m\}_{m=1}^\infty$ so that $J$ is
self-adjoint. A sufficient condition for this to happen is the
generalized Carleman criterion \cite[Chap.7 Thm. 2.9]{MR0222718},
viz., if $\sum_{m=1}^\infty 1/\norm{A_m}_{\mathcal{K}}=+\infty$, then
$J$ is self-adjoint.

Self-adjointness of $J$ implies that its domain, $\dom(J)$, coincides
with the maximal linear set in which the result of the ``action'' of the matrix
(\ref{eq:block-jm}) on a sequence (\ref{eq:sequence}) yields a sequence
in $\mathcal{H}$. Thus, it cannot lead to confusion if we use the same
letter $J$ to denote both the operator and the matrix. Likewise,
\begin{equation*}
  \diag\{C_m\}_{m=1}^\infty:=
\begin{pmatrix}
C_1&0&0&0&\cdots\\
0&C_2&0&0&\\
0&0&C_3&0&\ddots\\
0&0&0&C_4&\ddots\\
\vdots&&\ddots&\ddots&\ddots
\end{pmatrix}\,,
\end{equation*}
where $C_m\in B(\mathcal{K})$ for any $m\in\nats$,
is used for denoting the operator and the matrix (the operator being
$\bigoplus_{m=1}^\infty C_m$ with $C_m\in B(\mathcal{K})$ for all $m\in\nats$).

Consider the unilateral vector shift operator $S$ in $\mathcal{H}$ given by
\begin{equation*}
  S(u_1,u_2,u_3,\dots)^{\rm\textsf{T}}=(0,u_1,u_2,\dots)^{\rm\textsf{T}} 
\end{equation*}
and its adjoint $S^*$ for which
\begin{equation*}
  S^*(u_1,u_2,u_3,\dots)^{\rm\textsf{T}}=(u_2,u_3,u_4,\dots)^{\rm\textsf{T}}
\end{equation*}
It can be verified that the operator
\begin{equation}
  \label{eq:representation-j}
  \diag\{B_m\}_{m=1}^\infty + S\diag\{A^*_m\}_{m=1}^\infty +
\diag\{A_m\}_{m=1}^\infty S^*
\end{equation}
coincides with the self-adjoint operator $J$.


\begin{definition}
  \label{def:green-function}
  Assume that the operator $J$ given in Definition~\ref{def:j-nought}
  is self-adjoint. For any $\lambda$ in the resolvent set of $J$, define
  \begin{equation*}
    G_{jk}(\lambda):=P_j(J-\lambda I)^{-1}P_k\,.
  \end{equation*}
\end{definition}
Note that by this definition $G_{jk}(\lambda)^*=G_{kj}(\cc{\lambda})$,
and therefore
\begin{equation}
  \label{eq:norm-resolvent-adjoint}
  \norm{G_{jk}(\lambda)}_{B(\mathcal{K})}=
\norm{G_{kj}(\cc{\lambda})}_{B(\mathcal{K})}\,.
\end{equation}
Due to the fact that for $\lambda\not\in\sigma(J)$,
$(J-\lambda I)^{-1}\in B(\mathcal{H})$, one verifies that
\begin{equation*}
  (J-\lambda I)^{-1}u=
\left(\sum_{k=1}^\infty G_{1k}u_k, \sum_{k=1}^\infty G_{2k}u_k,
\sum_{k=1}^\infty G_{3k}u_k, \dots\right)^{\rm\textsf{T}}
\end{equation*}
Thus, $G_{jk}(\lambda)$ can be regarded as the entries of the matrix
representation of the resolvent of $J$ at $\lambda$. We refer to
$\{G_{jk}(\lambda)\}_{j,k=1}^\infty$ as the block Green matrix
corresponding to $J$ at $\lambda$.

\section{Estimates of generalized eigenvectors in an
  unbounded gap of the essential spectrum}
\label{sec:estim-gener-eigenv}
In this section we find estimates for generalized eigenvectors of the
operator $J$ given by Definition~\ref{def:j-nought} when it is
semi-bounded. There is no loss of generality in assuming the operator
$J$ bounded from below. We consider that the real part of the spectral
parameter is below the essential spectrum and obtain estimates for
both cases: when this parameter is not an eigenvalue and when it is.

To simplify the writing of some formulae, we introduce the functions
\begin{equation}
  \label{eq:psi-function-def}
   \psi(x):=x^2 e^{x}\,,\quad 0\le x\,.
\end{equation}
and
\begin{equation}
  \label{eq:phi-function-def}
    \phi_\delta(x):=
   \begin{cases}
     1/\sqrt{\delta} & \text{ if } 0\le x<\delta\\
     1/\sqrt{x}& \text{ if } \delta\le x
   \end{cases}
\end{equation}
\begin{theorem}
  \label{thm:infinite-interval}
  Assume that the operator $J$ given in Definition~\ref{def:j-nought}
  is self-adjoint and bounded from below. Take a real
  number $b$ such that
  $(-\infty,b)\cap\sigma_{ess}(J)=\emptyset$ and consider a complex
  number $\lambda$ with $\Re\lambda<b$. Fix $\delta>0$ and
  $\epsilon$ arbitrarily small in $(0,1)$.
 If $\lambda\not\in\sigma(J)$, then
    \begin{equation*}
      \norm{G_{jk}(\lambda)}_{B(\mathcal{K})}\le
    C
   \exp(-\gamma(\lambda)\!\!\sum\limits_{m=\min(j,k)}^{\max(j,k)-1}
\phi_\delta(\norm{A_m}_{B(\mathcal{K})}))\,,
\end{equation*}
where $G_{jk}(\lambda)$ is given in
Definition~\ref{def:green-function}, $C$ does not depend on $j$ and
$k$, $\phi_\delta$ is given in \eqref{eq:phi-function-def}
and
\begin{equation}
\label{eq:def-gamma}
  \gamma(\lambda):=
\sqrt{\delta}\psi^{-1}\left(\frac{(b-\Re\lambda)(1-\epsilon)}{\delta}\right)
\end{equation}
with $\psi^{-1}$ being the inverse function of $\psi$ given in
\eqref{eq:psi-function-def}.
\end{theorem}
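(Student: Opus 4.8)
The plan is to prove a Combes--Thomas type bound by conjugating $J$ with an exponential weight and reducing the estimate on $G_{jk}(\lambda)$ to a uniform bound on a conjugated resolvent. Set $\rho_1:=0$ and $\rho_m:=\sum_{l=1}^{m-1}\phi_\delta(\norm{A_l}_{B(\mathcal{K})})$ for $m\ge2$, and write $t_m:=\gamma\,\phi_\delta(\norm{A_m}_{B(\mathcal{K})})$ with $\gamma=\gamma(\lambda)$ as in \eqref{eq:def-gamma}. Rather than conjugating the (unbounded) weight $E:=\diag\{e^{\gamma\rho_m}I\}$ directly, I would define $J_\gamma$ to be the block tridiagonal matrix obtained from \eqref{eq:block-jm} by replacing the super- and sub-diagonal blocks $A_m$ and $A_m^*$ by $e^{-t_m}A_m$ and $e^{t_m}A_m^*$ while leaving each $B_m$ untouched; this is the formal matrix of $EJE^{-1}$ and is again a block Jacobi operator of the type in Definition~\ref{def:j-nought}. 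Since $\rho_j-\rho_k=\sum_{m=k}^{j-1}\phi_\delta(\norm{A_m}_{B(\mathcal{K})})$ for $j>k$, one has $G_{jk}(\lambda)=e^{\gamma(\rho_k-\rho_j)}P_j(J_\gamma-\lambda I)^{-1}P_k$, so the exponential prefactor already produces the factor $\exp(-\gamma(\lambda)\sum_{m=\min(j,k)}^{\max(j,k)-1}\phi_\delta(\norm{A_m}_{B(\mathcal{K})}))$, and the theorem reduces to the $j,k$-independent bound $\norm{(J_\gamma-\lambda I)^{-1}}_{B(\mathcal{H})}\le C$.

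To bound the conjugated resolvent I would use coercivity of its self-adjoint part. Split $J_\gamma-\lambda I=X+Y$, where $X:=\tfrac12(J_\gamma+J_\gamma^*)-\Re\lambda\,I$ is self-adjoint and $Y:=\tfrac12(J_\gamma-J_\gamma^*)-\I\,\Im\lambda\,I$ is anti-self-adjoint. Because $J_\gamma^*$ is the matrix with $A_m,A_m^*$ weighted by $e^{t_m},e^{-t_m}$, the diagonal is preserved and $\tfrac12(J_\gamma+J_\gamma^*)=J+\Delta$, where $\Delta$ is the off-diagonal block matrix whose super- and sub-diagonal blocks are $(\cosh t_m-1)A_m$ and $(\cosh t_m-1)A_m^*$. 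Since $\Re\inner{(J_\gamma-\lambda I)u}{u}=\inner{Xu}{u}$ for every $u$, a lower bound $X\ge cI$ with $c>0$ gives $\norm{(J_\gamma-\lambda I)u}\,\norm{u}\ge c\norm{u}^2$, hence injectivity together with $\norm{(J_\gamma-\lambda I)^{-1}}\le 1/c$ on the range; applying the same reasoning to the adjoint (whose self-adjoint part is again $X$) gives dense range, so $J_\gamma-\lambda I$ is boundedly invertible with $C=1/c$.

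The computational heart is the norm estimate $\norm{\Delta}_{B(\mathcal{H})}\le 2\sup_m(\cosh t_m-1)\norm{A_m}_{B(\mathcal{K})}\le(1-\epsilon)(b-\Re\lambda)$. I would derive it from the elementary inequality $\cosh t-1\le\tfrac12\psi(t)=\tfrac12 t^2e^t$ for $t\ge0$, combined with \eqref{eq:def-gamma}: putting $y:=\psi^{-1}\!\big((b-\Re\lambda)(1-\epsilon)/\delta\big)$, so that $\gamma=\sqrt\delta\,y$ and $\delta y^2e^y=(b-\Re\lambda)(1-\epsilon)$, one checks the two regimes of $\phi_\delta$. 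If $\norm{A_m}_{B(\mathcal{K})}\ge\delta$ then $t_m=\gamma/\sqrt{\norm{A_m}_{B(\mathcal{K})}}\le y$ and $(\cosh t_m-1)\norm{A_m}_{B(\mathcal{K})}\le\tfrac12 t_m^2e^{t_m}\norm{A_m}_{B(\mathcal{K})}\le\tfrac12\delta y^2e^{y}$; if $\norm{A_m}_{B(\mathcal{K})}<\delta$ then $t_m=y$ and $(\cosh y-1)\norm{A_m}_{B(\mathcal{K})}\le\tfrac12\delta y^2e^{y}$. Both regimes yield $(\cosh t_m-1)\norm{A_m}_{B(\mathcal{K})}\le\tfrac12(b-\Re\lambda)(1-\epsilon)$, and this is exactly the point where the specific choices of $\psi$, $\phi_\delta$ and $\gamma$ interlock. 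Consequently, when $J\ge bI$, we get $X=(J-\Re\lambda\,I)+\Delta\ge(b-\Re\lambda)I-\norm{\Delta}_{B(\mathcal{H})}I\ge\epsilon(b-\Re\lambda)I$, so $c=\epsilon(b-\Re\lambda)$ completes this case.

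The hard part is the discrete spectrum of $J$ in $(-\infty,b)$ --- precisely the generalization beyond the hypothesis $J\ge bI$ --- since on the low-energy eigenspaces the coercivity $X\ge cI$ fails. Because $\sigma_{ess}(J)\cap(-\infty,b)=\emptyset$, for any threshold $b''<b$ the subspace $\widetilde{\mathcal H}:=\ran E_J((-\infty,b''])$ is finite dimensional, spanned by genuine eigenvectors of $J$, and one would split $(J-\lambda I)^{-1}$ into the finite-rank part carried by these eigenvectors and the part supported on $\ran E_J((b'',\infty))$. On the latter $J\ge b''$, and choosing $b''$ close enough to $b$ (depending only on $\epsilon$) the estimate of the previous paragraph still gives a strictly positive lower bound while retaining the rate $\gamma(\lambda)$, since the margin $\epsilon(b-\Re\lambda)$ degrades continuously as $b''\uparrow b$. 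The finite-rank part is then controlled by the weighted decay of the finitely many eigenvectors, which is the bridge to Theorem~\ref{thm:infinite-interval-in-spectrum}. The most delicate point --- and the main obstacle --- is to verify that the exponentially weighted eigenvectors remain in $\mathcal{H}$ and that their contribution obeys the bound at rate $\gamma(\lambda)$; this is where the deliberately conservative growth $\psi(x)=x^2e^x$ (rather than $e^x$) is essential, as it keeps $\gamma(\lambda)$ small enough to dominate the decay produced by eigenvalues approaching $b$ from below, so that the prefactor $C$ can absorb the poles of the resolvent while $j,k$-independence is preserved.
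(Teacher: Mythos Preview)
Your first three paragraphs essentially reproduce the paper's Combes--Thomas argument: the weight, the perturbation $\Delta=\Re F$, the inequality $\cosh t-1\le\tfrac12\psi(t)$, and the two-regime analysis of $\phi_\delta$ all match. Two technical points are glossed over. First, the identity $G_{jk}(\lambda)=e^{\gamma(\rho_k-\rho_j)}P_j(J_\gamma-\lambda I)^{-1}P_k$ is only formal, since $E$ is unbounded; the paper handles this by \emph{truncating} the weight at level $N$ (so $\Phi(N)$ is a bounded invertible contraction), proving the bound uniformly in $N$, and letting $N\to\infty$ at the very end. Second, your $Y=\tfrac12(J_\gamma-J_\gamma^*)-\I\Im\lambda\,I$ has off-diagonal blocks $\pm(\sinh t_m)A_m^{(*)}$, whose norms behave like $\gamma\sqrt{\norm{A_m}}$ and are therefore \emph{unbounded} in general; this does not spoil the coercivity inequality, but it means $J_\gamma-J$ is not a bounded perturbation and one must be careful about domains when invoking closedness and surjectivity.

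The genuine gap is in your last paragraph, the passage from $J\ge bI$ to the general semi-bounded case. Your proposed splitting along the spectral projection $E_J((-\infty,b''])$ does not work: the conjugated operator $J_\gamma$ (equivalently $X+Y$) does \emph{not} respect the spectral decomposition of $J$, so coercivity of $X$ on $\ran E_J((b'',\infty))$ does not yield a resolvent bound there, and the unbounded anti-self-adjoint part $Y$ mixes the two subspaces uncontrollably. Your fallback---controlling the finite-rank piece via the weighted decay of eigenvectors---is circular, since that decay is precisely Theorem~\ref{thm:infinite-interval-in-spectrum}, which is proved \emph{from} Theorem~\ref{thm:infinite-interval}. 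You yourself flag this as ``the main obstacle'' without resolving it.

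The paper's device avoids the spectral decomposition of $J$ entirely. It sets $K:=(J-b)E_J(-\infty,b)$, which is compact with $J-K\ge bI$, and then subtracts the \emph{block-finite-rank} operator $K\widetilde P_M$ (with $\widetilde P_M$ the projection onto the first $M$ coordinate blocks, $M$ chosen so that $\norm{K(I-\widetilde P_M)}\le\tfrac\epsilon2(b-\Re\lambda)$). This restores $\Re(J+F-K\widetilde P_M-\lambda I)\ge\tfrac\epsilon2(b-\Re\lambda)I$, giving a uniform bound on $(J+F-K\widetilde P_M-\lambda I)^{-1}$. One then writes
\[
(J+F-\lambda I)^{-1}=(J+F-K\widetilde P_M-\lambda I)^{-1}\bigl[I+K\widetilde P_M(J+F-K\widetilde P_M-\lambda I)^{-1}\bigr]^{-1}
\]
and checks via a Fredholm argument (using $\lambda\notin\sigma(J)$) that the bracket is invertible. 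The crucial point is that $\norm{\widetilde P_M\Phi(N)^{-1}}$ is bounded \emph{independently of $N$} once $N>M$, because the weight is frozen on the first $M$ blocks; this is exactly what lets the constant $C$ absorb the finite-rank correction without any a priori knowledge of eigenvector decay. This trick---replacing the spectral cut by a coordinate cut---is the missing idea in your proposal.
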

\begin{proof}
Let $E$ be the spectral measure of the self-adjoint operator $J$,
i.\,e., $J=\int sdE_s$. Define
\begin{equation*}
  K:=(J-b)E(-\infty,b)\,.
\end{equation*}
Due to the fact that there is no essential spectrum in $(-\infty,b)$,
the operator $K$ is compact and
\begin{equation}
  \label{eq:j-d-greater-d}
  J-K\ge bI\,.
\end{equation}
Choose the number $M$ so large that
\begin{equation}
  \label{eq:estimate-k-projected}
  \norm{K(I-\widetilde{P}_M)}_{B(\mathcal{H})}\le (b-\Re\lambda)\frac{\epsilon}{2}
\end{equation}
for any fixed $\lambda$ such that $\Re\lambda<b$. This can be done
since the sequence $\{\widetilde{P}_M\}_{M=1}^\infty$ converges strongly to $I$ as
$M\to\infty$. Clearly, our choice of $M$ depends on
$\lambda,b,\epsilon$, and $J$.

For any fixed $N\in\nats$, let
\begin{equation}
\label{eq:def-Phi-m}
  \Phi_m:=
\begin{cases}
\exp\left(-\gamma\sum_{k=1}^{m-1}\phi_\delta(\norm{A_k}_{B(\mathcal{K})})\right)I\,,&
m\le N\,,\\
\exp\left(-\gamma\sum_{k=1}^{N-1}\phi_\delta(\norm{A_k}_{B(\mathcal{K})})\right)I\,,&
m> N\,,
\end{cases}
\end{equation}
where $\phi_\delta$ is given in \eqref{eq:phi-function-def} and $\gamma$ is to be
determined later. Note that $\Phi_m$ is a scalar matrix for all
$m\in\nats$.  Consider the following bounded operator in $\mathcal{H}$
\begin{equation*}
\Phi:=\diag\{\Phi_m\}_{m=1}^\infty\,.
\end{equation*}
Besides depending on $\delta$ and $\gamma$, this operator depends on
$N$. When needed, we indicate this dependence explicitly, i.\,e.,
$\Phi=\Phi(N)$. Note that, by freezing the sequence
$\{\Phi_m\}_{m=1}^\infty$ from $\Phi_N$ onwards in
\eqref{eq:def-Phi-m}, the operator $\Phi(N)$ is a boundedly invertible
contraction for any finite $N$. At the end of this proof, we let $N\to
+\infty$.

Define
\begin{equation*}
  F:=S\diag\{\Phi_{m+1}^{-1}A_m^*\Phi_{m}-A_m^*\}
+\diag\{\Phi_m^{-1}A_m\Phi_{m+1}-A_m\}S^*\,.
\end{equation*}
By \eqref{eq:def-Phi-m}, $F\in B(\mathcal{H})$ (actually, it is a
``block-finite-rank'' operator, viz., the sequences of the form
\eqref{eq:sequence} in the range of $F$ have a finite number of
nonzero elements). Using (\ref{eq:representation-j}), one verifies that
\begin{equation}
\label{eq:relation-f-phi}
  F=\Phi^{-1}J\Phi - J
\end{equation}
and
\begin{equation}
 \label{eq:phi-operator-phi}
  \Phi^{-1}(J-\lambda I)\Phi=J+F-\lambda I\,.
\end{equation}
Also,
\begin{align*}
  2\Re F&=F+F^*\\ &=
S\diag\{\Phi_m^{-1}A_m\Phi_{m+1}-2A_m+\Phi_m^*A_m(\Phi_{m+1}^*)^{-1}\}^*\\
 &+\diag\{\Phi_m^{-1}A_m\Phi_{m+1}-2A_m+\Phi_m^*A_m(\Phi_{m+1}^*)^{-1}\}S^*\,.
\end{align*}
Since the matrix of the operator $2\Re F$ has only two block diagonals
not necessarily zero and one diagonal is the adjoint of the other, one
has
\begin{equation}
\label{eq:real-part-first-estimate}
\norm{\Re F}_{B(\mathcal{H})}\le\sup_{m\in\nats}\left\{\norm{\Phi_m^{-1}A_m\Phi_{m+1}-2A_m+
\Phi_m^*A_m(\Phi_{m+1}^*)^{-1}}_{B(\mathcal{K})}\right\}\,.
\end{equation}
Let us show that, by choosing $\gamma$ appropriately, one can ensure that
\begin{equation}
 \label{eq:realpart-c}
  \norm{\Re F}_{B(\mathcal{H})}\le (1-\epsilon)(b-\Re\lambda)
\end{equation}
under our assumption that $\Re\lambda<b$.
First note that
(\ref{eq:real-part-first-estimate}) implies
\begin{align*}
  \norm{\Re F}_{B(\mathcal{H})}&\le
\sup_{m\le N}
\left\{\norm{A_m\left(e^{-\gamma\phi_{\delta}(\norm{A_m}_{B(\mathcal{K})})}
-2I+e^{\gamma\phi_\delta(\norm{A_m}_{B(\mathcal{K})})}\right)}_{B(\mathcal{K})}\right\}
\\
&\le
\sup_{m\in\nats}
\left\{\norm{A_m\left(e^{-\gamma\phi_{\delta}(\norm{A_m}_{B(\mathcal{K})})}
-2I+e^{\gamma\phi_\delta(\norm{A_m}_{B(\mathcal{K})})}\right)}_{B(\mathcal{K})}\right\}\,.
\end{align*}
On the basis of the inequality
\begin{equation}
 \label{eq:algebraic-inequality-exp}
  0\le e^x-2+e^{-x}\le x^2e^x
\end{equation}
valid for $x\ge 0$, one has
\begin{equation}
 \label{eq:bound-real-part-through-sup}
  \norm{\Re F}_{B(\mathcal{H})}\le
\sup_{m\in\nats}
\left\{\norm{A_m}_{B(\mathcal{K})}\gamma^2\phi_\delta^2(\norm{A_m}_{B(\mathcal{K})})
e^{\gamma\phi_\delta(\norm{A_m}_{B(\mathcal{K})})}\right\}\,.
\end{equation}
Fix $\delta>0$ and choose $\gamma$ so small that the inequality
\begin{equation}
  \label{eq:aux-inequality}
  \xi\psi(\gamma\phi_\delta(\xi))\le (1-\epsilon)(b-\Re\lambda)
\end{equation}
holds for all $\xi$ in $[0,\delta]$. Taking into account the behaviour of the
function $\phi_\delta$ when its argument is not greater than $\delta$
(see \eqref{eq:phi-function-def}),
one verifies that the inequality holds whenever $\gamma$ is given by
(\ref{eq:def-gamma}). Actually the inequality also holds for
$\delta<\xi$ since the function $\xi\psi(\gamma\phi_\delta(\xi))$ is
monotone decreasing in $\xi$ for $\xi<\delta$ as long as $\gamma>0$.
In view of \eqref{eq:psi-function-def} and \eqref{eq:phi-function-def}, the inequalities
(\ref{eq:bound-real-part-through-sup}) and (\ref{eq:aux-inequality})
imply (\ref{eq:realpart-c}).

Using (\ref{eq:j-d-greater-d}), one verifies that
\begin{equation*}
  \Re(J+F-K\widetilde{P}_M-\lambda I)\ge
(b-\Re\lambda)I+K(I-\widetilde{P}_M +\Re F)\,.
\end{equation*}
Thus, due to \eqref{eq:estimate-k-projected} and
\eqref{eq:realpart-c}, the last inequality  yields
\begin{equation*}
  \Re(J+F-K\widetilde{P}_M-\lambda I)\ge
  (b-\Re\lambda)\left(1-\frac{\epsilon}{2}-(1-\epsilon)\right)I =
  (b-\Re\lambda))\frac{\epsilon}{2} I\,.
\end{equation*}
Since $J+F-K\widetilde{P}_M$ is a self-adjoint operator perturbed by
a bounded operator, it follows from the last inequality that
\begin{equation*}
  J+F-K\widetilde{P}_M-(\lambda + \frac{\epsilon}{2}(b -\Re\lambda))I
\end{equation*}
is a maximal accretive operator. Thus, $J+F-K\widetilde{P}_M-\lambda I$ is
invertible and the estimate
(see
\cite{MR0407617})
\begin{equation}
  \label{eq:estimate-resolvent-aux}
  \norm{(J+F-K\widetilde{P}_M-\lambda I)^{-1}}_{B(\mathcal{H})}\le \frac{2}{\epsilon(b-\Re\lambda)}
\end{equation}
is known to hold for $\Re\lambda< b$.

Below we use that the operator $I+K\widetilde{P}_M((J+F-K\widetilde{P}_M-\lambda I)^{-1}$ is
boundedly invertible. This fact is established as follows. Since
$K\widetilde{P}_M((J+F-K\widetilde{P}_M-\lambda I)^{-1}$ is compact, it suffices to show that
$\ker(I+K\widetilde{P}_M((J+F-K\widetilde{P}_M-\lambda I)^{-1})$ is trivial. Suppose on the
contrary that
\begin{equation*}
  0\ne v\in\ker(I+K\widetilde{P}_M(J+F-K\widetilde{P}_M-\lambda I)^{-1})\,,
\end{equation*}
then
\begin{equation*}
  \left(J+F-\lambda I\right)(J+F-K\widetilde{P}_M-\lambda I)^{-1}v=0\,,
\end{equation*}
which implies that $\ker(J+F-\lambda I)$ is not empty since
$(J+F-K\widetilde{P}_M-\lambda I)^{-1}v\ne 0$. Therefore, using
\eqref{eq:phi-operator-phi} and taking into account that $\Phi(N)$ and
$[\Phi(N)]^{-1}$ are bounded for any $N<+\infty$, one concludes that
$J-\lambda I$ is not invertible which contradicts the fact that
$\lambda$ is in the resolvent set.

Due to the algebraic identity
\begin{equation}
  \label{eq:aux-inversion}
  \left(J+F-\lambda I\right)^{-1}=
(J+F-K\widetilde{P}_M-\lambda I)^{-1}\left[I+K\widetilde{P}_M(J+F-K\widetilde{P}_M-\lambda I)^{-1}\right]^{-1}
\end{equation}
and \eqref{eq:phi-operator-phi}, one has
\begin{align*}
  &K\widetilde{P}_M\Phi^{-1}(J-\lambda I)^{-1}\Phi\\
&=K\widetilde{P}_M(J+F-K\widetilde{P}_M-\lambda
  I)^{-1}[I+K\widetilde{P}_M(J+F-K\widetilde{P}_M-\lambda I)^{-1}]^{-1}\\
&=(\!-I\!+\!I+\!K\widetilde{P}_M(J+F-K\widetilde{P}_M-\lambda
  I)^{-1})[I+K\widetilde{P}_M(J+F-K\widetilde{P}_M-\lambda I)^{-1}]^{-1}\\
&=I-[I+K\widetilde{P}_M(J+F-K\widetilde{P}_M-\lambda I)^{-1}]^{-1}\,.
\end{align*}
Thus,
\begin{align*}
  \norm{[I+K\widetilde{P}_M(J+F-K\widetilde{P}_M-\lambda I)^{-1}]^{-1}}_{B(\mathcal{H})}\!\!\!&=
  \norm{I-K\widetilde{P}_M\Phi^{-1}(J-\lambda I)^{-1}\Phi}_{B(\mathcal{H})}\\
&
\hspace{-5.2cm}
\le 1+\norm{K}_{B(\mathcal{H})}\norm{\widetilde{P}_M[\Phi(N)]^{-1}}_{B(\mathcal{H})}\norm{(J-\lambda
  I)^{-1}}_{B(\mathcal{H})}\norm{\Phi(N)}_{B(\mathcal{H})}\\
&
\hspace{-5.2cm}
\le 1+\norm{K}_{B(\mathcal{H})}\norm{\widetilde{P}_M[\Phi(M)]^{-1}}_{B(\mathcal{H})}\norm{(J-\lambda
  I)^{-1}}_{B(\mathcal{H})}\,,
\end{align*}
where in the last inequality, we have chosen $N>M$ and taken into
account that $\Phi(N)$ is a contraction. This estimate,
together with \eqref{eq:phi-operator-phi},
\eqref{eq:estimate-resolvent-aux}, and \eqref{eq:aux-inversion}, allow
us to write
\begin{align}
  \norm{\Phi^{-1}(N)(J-\lambda I)^{-1}\Phi(N)}_{B(\mathcal{H})}
\nonumber\\
&
\hspace{-3.8cm}
\le
 \frac{2}{\epsilon(b-\Re\lambda)}\left(1+\norm{K}_{B(\mathcal{H})}
\norm{\widetilde{P}_M[\Phi(M)]^{-1}}_{B(\mathcal{H})}\norm{(J-\lambda
  I)^{-1}}_{B(\mathcal{H})}\right)\nonumber\\
&
\hspace{-3.8cm}
=:C\,,\label{eq:bounded-by-constant}
\end{align}
where the $C$ does not depend on $N$. In view of
Definition~\ref{def:green-function} and (\ref{eq:def-Phi-m}), the last
inequality implies
\begin{equation}
 \label{eq:final-inequality-first}
  \norm{\exp\left(\gamma\sum_{m=1}^{j-1}\phi_\delta(\norm{A_m}_{B(\mathcal{K})})\right)
G_{jk}(\lambda)
\exp\left(-\gamma\sum_{m=1}^{k-1}\phi_\delta(\norm{A_m}_{B(\mathcal{K})})\right)}_{B(\mathcal{K})}
\le C
\end{equation}
for $j,m\le N$. The estimate of the theorem is finally proven by
combining both scalar exponential factors in
(\ref{eq:final-inequality-first}) and letting $N\to\infty$. Formally,
in this proof, $j\ge k$, but the other case is also covered by
recurring to \eqref{eq:norm-resolvent-adjoint}.
\end{proof}
\begin{remark}
  \label{rem:qualified-estimate}
  It is possible to obtain a qualified estimate of the
  constant $C$.
 Indeed, it follows
from (\ref{eq:bounded-by-constant}) that
\begin{equation}
\label{eq:qualified-estimate-pre}
\begin{split}
  &\norm{\exp(\gamma(\lambda)\!\! \sum\limits_{k=\min(m,j)}^{\max(m,j)-1}
\phi_\delta(\norm{A_k}))G_{mj}(\lambda)}\\ &\le
\frac{2}{\epsilon(b-\Re\lambda)}
\left(1+\frac{\abs{b-\min\sigma_p(J)}}{\dist(\lambda,\sigma(J))}
\norm{\widetilde{P}_M[\Phi(M)]^{-1}}_{B(\mathcal{H})}\right)
\end{split}
\end{equation}
for any $j, k\in\nats$ and some $M\in\nats$. Since
\begin{equation*}
 \norm{\widetilde{P}_M[\Phi(M)]^{-1}}_{B(\mathcal{H})} =\norm{\exp(\gamma(\lambda)\sum\limits_{k=1}^{M}
\phi_\delta(\norm{A_k}))}\le e^{\gamma(\lambda) M \delta^{-1/2}}\,,
\end{equation*}
one obtains from (\ref{eq:qualified-estimate-pre}) that
\begin{equation*}
  \norm{\exp(\gamma(\lambda)\!\! \sum\limits_{k=\min(m,j)}^{\max(m,j)-1}
\phi_\delta(\norm{A_k}))G_{mj}(\lambda)}\le
\frac{2\left(1+\frac{\abs{b-\min\sigma_p(J)}}{\dist(\lambda,\sigma(J))}
e^{\gamma(\lambda) M \delta^{-1/2}}\right)}{\epsilon(b-\Re\lambda)}\,.
\end{equation*}

Note that the choice of $M$ is given by
(\ref{eq:estimate-k-projected}) and is independent of $m$ and $j$.
Admissible values of $M$ can be found by considering the canonical
representation of the compact operator $K$ (see
\cite[Chap.\,11, Sec.\,1]{MR1192782}). Let $\{\lambda_k\}_{k}$ be the
sequence of eigenvalues of $J$ which are less than $b$ enumerated so that their multiplicity is taken
into account. This sequence may be finite. Define
\begin{equation*}
  s_k=
  \begin{cases}
    \lambda_k-b & \lambda_k< b\\
    0 &\text{otherwise}\,.
  \end{cases}
\end{equation*}
Thus, for $f\in\mathcal{H}$,
\begin{equation*}
  \norm{K(I-\widetilde{P}_M)f}^2=
\sum_{k=1}^\infty s_k^2\abs{\inner{f}{(I-\widetilde{P}_M)\phi_k}}^2\,,
\end{equation*}
where $\{\phi_k\}_{k=1}^\infty$ is an orthonormal system of
eigenvectors of $J$ corresponding to its spectrum in $(-\infty,
d)$. Therefore, in view of (\ref{eq:estimate-k-projected}), one
arrives at the following condition for $M$.
\begin{equation*}
  \sup_{\norm{f}=1}\sqrt{\sum_k(\lambda_k-b)^2
\abs{\inner{f}{(I-\widetilde{P}_M)\phi_k}}^2}\le
(b-\Re\lambda)\frac{\epsilon}{2}\,.
\end{equation*}
\end{remark}
\begin{lemma}
  \label{lem:perturbed-not-in-spectrum}
  Let $J=J^*$ be the operator given in Definition~\ref{def:j-nought} and
  $L$ be a compact operator in $\mathcal{K}$ with trivial kernel such that
  $\norm{L}_{B(\mathcal{K})}=1$. If $A_m$ has trivial kernel for all
  $m\in\nats$ and $\lambda$ is in the discrete spectrum of $J$, then,
  for any  $\tau>0$ sufficiently small, $\lambda$ is not in the spectrum of
  \begin{equation}
    \label{eq:t-epsilon}
    J(\tau):=J+\tau P_1L^*LP_1
  \end{equation}
(see Notation~(\ref{not-4})).
\end{lemma}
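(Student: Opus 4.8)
The plan is to view $J(\tau)=J+\tau V$, where $V:=P_1L^*LP_1$, as a bounded self-adjoint perturbation of $J$; note that $V$ is nonnegative and compact, since $L^*L$ is. Because $\lambda$ belongs to the discrete spectrum, it is an isolated eigenvalue of finite multiplicity $N:=\dim\ker(J-\lambda I)$; write $E_\lambda:=\ker(J-\lambda I)$ and fix a circle $\Gamma$ in the complex plane enclosing $\lambda$ and no other point of $\sigma(J)$. The strategy is to track $\lambda$ by first-order analytic perturbation theory and to show that, to leading order, every eigenvalue near $\lambda$ moves strictly upward.

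The decisive structural step is to prove that every nonzero $u\in E_\lambda$ has $u_1\neq0$. Indeed, suppose $Ju=\lambda u$ with $u_1=0$. The first component \eqref{eq:difference-initial} gives $A_1u_2=\lambda u_1-B_1u_1=0$, so $u_2=0$ because $\ker A_1=\{0\}$. Proceeding inductively, once $u_1=\dots=u_k=0$ the $k$-th component \eqref{eq:difference-recurrence} collapses to $A_ku_{k+1}=0$, which forces $u_{k+1}=0$ since $\ker A_k=\{0\}$. Hence $u\equiv0$, and the contrapositive yields the claim. From this I would deduce that the perturbation is strictly positive on the eigenspace: for $u\in E_\lambda$ one has $\inner{u}{Vu}=\inner{u_1}{L^*Lu_1}_{\mathcal{K}}=\norm{Lu_1}_{\mathcal{K}}^2$, and this is strictly positive whenever $u\neq0$, since $u_1\neq0$ by the previous step and $\ker L=\{0\}$ gives $Lu_1\neq0$. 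Consequently, in any orthonormal basis $\{\psi_i\}_{i=1}^N$ of $E_\lambda$, the Hermitian matrix $[\inner{\psi_i}{V\psi_j}]_{i,j=1}^N$ is positive definite, so all of its eigenvalues $\mu_1,\dots,\mu_N$ are strictly positive.

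To conclude I would invoke the Riesz projection $P(\tau)=\frac{1}{2\pi\I}\oint_\Gamma(zI-J(\tau))^{-1}\,dz$. For $\tau$ small enough that $\Gamma\subset\rho(J(\tau))$ (which holds since $\dist(\Gamma,\sigma(J))>0$ while $\norm{\tau V}_{B(\mathcal{H})}\to0$), the projection $P(\tau)$ is analytic in $\tau$ and $\dim\ran P(\tau)=N$, so the spectrum of $J(\tau)$ inside $\Gamma$ consists of exactly $N$ real eigenvalues (compactness of $V$ also keeps $\sigma_{ess}$ fixed, so nothing else can enter $\Gamma$). By the Rellich--Kato theory of self-adjoint analytic families these eigenvalues are analytic functions $\lambda_j(\tau)=\lambda+\tau\mu_j+O(\tau^2)$ whose first-order shifts are precisely the eigenvalues $\mu_j$ of the reduced perturbation above. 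Since each $\mu_j>0$, there is $\tau_0>0$ with $\lambda_j(\tau)>\lambda$ for all $j$ whenever $0<\tau<\tau_0$; thus none of the finitely many eigenvalues of $J(\tau)$ near $\lambda$ equals $\lambda$, and $\lambda\notin\sigma(J(\tau))$.

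The main obstacle is the nondegeneracy of the perturbation on $E_\lambda$: a merely nonnegative perturbation guarantees only that eigenvalues do not decrease, and $\lambda$ would fail to leave the spectrum precisely if some eigenvector were annihilated by $V$. The two hypotheses are exactly what exclude this. Triviality of $\ker A_m$ for all $m$ drives the propagation argument that forces $u_1\neq0$, and triviality of $\ker L$ then ensures $Lu_1\neq0$; together they make $[\inner{\psi_i}{V\psi_j}]$ strictly positive definite, which upgrades the weak monotonicity of eigenvalues to the strict upward motion needed for the conclusion.
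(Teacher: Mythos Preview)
Your proof is correct and takes a genuinely different route from the paper's. Both arguments hinge on the same structural observation---that every nonzero $u\in\ker(J-\lambda I)$ has $u_1\ne0$, proven by the forward-propagation argument using $\ker A_m=\{0\}$---and both then use $\ker L=\{0\}$ to convert this into nondegeneracy of the perturbation on the eigenspace. From there the methods diverge. You invoke Kato--Rellich analytic perturbation theory: the Gram matrix $[\inner{\psi_i}{V\psi_j}]$ is strictly positive definite, so the first-order shifts $\mu_j>0$ push every eigenvalue branch strictly above $\lambda$ for small $\tau>0$. The paper instead argues by direct contradiction: assuming a nonzero $v_\tau\in\ker(J(\tau)-\lambda I)$ exists, it decomposes $v_\tau$ via the spectral projection $E(\{\lambda\})$ of $J$, obtains the estimate $\norm{E(\{\lambda\}^\complement)v_\tau}\le C\tau\norm{LP_1E(\{\lambda\})v_\tau}$, and combines this with a spectral-gap argument for the finite-rank operator $T^*T$ (where $T:=LP_1E(\{\lambda\})$) to force $P_1v_\tau=0$, whence $v_\tau=0$ by propagation again. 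Your approach is shorter and more conceptual, resting on a standard black box; the paper's is self-contained and avoids the perturbation-series machinery, at the cost of a longer computation.
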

\begin{proof}
  We prove the assertion by \emph{reductio ad absurdum}. Note that,
  since $P_1L^*LP_1$ is a compact operator in $\mathcal{H}$, Weyl
  perturbation theorem \cite[Chp.\,4, Thm.\,5.35]{MR0407617} tells us
  that $\lambda$ is not in the essential spectrum of $J(\tau)$. Suppose
  that for any neighborhood of zero there is $\tau>0$ such that
  $\ker(J(\tau)-\lambda I)$ is not trivial. Pick a nonzero vector
  \begin{equation}
  v_\tau\in\ker(J(\tau)-\lambda I)\label{eq:v_eps-lambda-i}
\end{equation}
If $E$ is the spectral measure of $J$, one has
  \begin{equation}
    \label{eq:one-star}
    E(\{\lambda\})P_1 L^*LP_1 v_\tau=0\quad\text{ for all } \tau>0\,.
  \end{equation}
This is seen by applying the projector $E(\{\lambda\})$ to the equality
\begin{equation}
 \label{eq:t-lambda}
  (J-\lambda I)v_\tau=-\tau P_1L^*LP_1v_\tau
\end{equation}
which is obtained from (\ref{eq:t-epsilon}) and
(\ref{eq:v_eps-lambda-i}). Now, (\ref{eq:t-lambda}) implies in turn that
\begin{equation*}
  E(\{\lambda\}^\complement)(J -\lambda
  I)E(\{\lambda\}^\complement)
 [E(\{\lambda\}^\complement)v_\tau]=-\tau
 E(\{\lambda\}^\complement) P_1L^*LP_1v_\tau\,.
\end{equation*}
where
\begin{equation}
 \label{eq:projector-complement}
  E(\{\lambda\}^\complement):=E(\reals\setminus\{\lambda\})=I-E(\{\lambda\})\,.
\end{equation}
Therefore
\begin{equation*}
  E(\{\lambda\}^\complement)v_\tau=-\tau E(\{\lambda\}^\complement)
 \left[(J -\lambda
   I)\upharpoonright_{\ran(E(\{\lambda\}^\complement))}\right]^{-1}
E(\{\lambda\}^\complement)P_1L^*LP_1v_\tau\,.
\end{equation*}
Taking into account (\ref{eq:projector-complement}),
one obtains from the previous equality that
\begin{equation}
\label{eq:operator-in-braces}
 \begin{split}
  &\left\{I+\tau\left[(J -\lambda
   I)\upharpoonright_{\ran(E(\{\lambda\}^\complement))}\right]^{-1}
  E(\{\lambda\}^\complement)P_1L^*LP_1
\right\}E(\{\lambda\}^\complement)v_\tau\\
&=-\tau\left[(J -\lambda
   I)\upharpoonright_{\ran(E(\{\lambda\}^\complement))}\right]^{-1}
  E(\{\lambda\}^\complement)P_1L^*LP_1 E(\{\lambda\})v_\tau\,.
\end{split}
\end{equation}
Using the fact that
\begin{equation*}
  \norm{\left[(J -\lambda
   I)\upharpoonright_{\ran(E(\{\lambda\}^\complement))}\right]^{-1}}_{B(\mathcal{H})}=
 \frac{1}{\dist(\lambda, \sigma(J)\setminus\{\lambda\})}
\end{equation*}
and choosing
\begin{equation}
 \label{eq:tau-small}
  \tau<\frac{\dist(\lambda, \sigma(J)\setminus\{\lambda\})}{2}\,,
\end{equation}
one verifies that the operator in braces in
(\ref{eq:operator-in-braces}) is invertible and the following estimate holds
\begin{equation}
  \label{eq:estimate-two-stars}
\norm{E(\{\lambda\}^\complement)v_\tau}_{B(\mathcal{H})}\le\frac{2\tau}{\dist(\lambda, \sigma(J)\setminus\{\lambda\})}\norm{LP_1 E(\{\lambda\})v_\tau}_{B(\mathcal{H})}\,.
\end{equation}
Now, it follows from (\ref{eq:one-star}) and
(\ref{eq:projector-complement}) that
\begin{equation*}
  E(\{\lambda\})P_1 L^*LP_1 E(\{\lambda\})v_\tau
  =-E(\{\lambda\})P_1 L^*LP_1 E(\{\lambda\}^\complement)v_\tau\,,
\end{equation*}
whence
\begin{align*}
  \norm{E(\{\lambda\})P_1 L^*LP_1 E(\{\lambda\})v_\tau}&\le
  \norm{E(\{\lambda\}^\complement)v_\tau}\\
 &\le \frac{2\tau}{\dist(\lambda, \sigma(J)\setminus\{\lambda\})}
\norm{LP_1 E(\{\lambda\})v_\tau}\,,
\end{align*}
where, in the last inequality, we have used
(\ref{eq:estimate-two-stars}). If one defines
\begin{equation}
  \label{eq:def-t}
  T:=LP_1
E(\{\lambda\})\,,
\end{equation}
then the inequality above can be written as
\begin{equation*}
  \norm{T^*Tv_\tau}\le
\frac{2\tau}{\dist(\lambda, \sigma(J)\setminus\{\lambda\})}
\norm{Tv_\tau}
\end{equation*}
which in turn implies that
\begin{equation}
  \label{eq:quadratic-form}
  \inner{\left[(T^*T)^2-\frac{4\tau^2}{\dist(\lambda,
\sigma(J)\setminus\{\lambda\})^2}T^*T\right]
v_\tau}{v_\tau}\le 0\,.
\end{equation}

We now show that $T$ is not the zero operator. Indeed, if
$T=0$, then it follows from \eqref{eq:def-t} that, for any $u\in\mathcal{H}$,
\begin{equation}
  \label{eq:projector-e-null}
  P_1E(\{\lambda\})u=0
\end{equation}
due to the fact that $\ker(L)=\{0\}$. But, since the vector
$E(\{\lambda\})u$ is in the kernel of $J-\lambda I$, it should satisfy
the equation (see \eqref{eq:difference-expr})
\begin{align*}
  (\Upsilon E(\{\lambda\}u)_1&=\lambda P_1E(\{\lambda\})u\\
&=0\,,
\end{align*}
where we have used
\eqref{eq:projector-e-null} in the second equality. This
last expression implies, via \eqref{eq:difference-initial}, that
\begin{equation*}
B_1 P_1E(\{\lambda\}u + A_1 P_2E(\{\lambda\}u=0
\end{equation*}
and from this, using again \eqref{eq:projector-e-null}, one obtain
$A_1 P_2E(\{\lambda\}u=0$ which in turn implies that $
P_2E(\{\lambda\}u=0$ since $\ker(A_1)$ is trivial. Having
established that $P_1E(\{\lambda\}u=0$ and $P_2E(\{\lambda\}u=0$, one
finds recurrently from \eqref{eq:difference-recurrence}, taking into
account that $\ker(A_m)=\{0\}$ for all $m\in\nats$, that
\begin{equation*}
  P_3E(\{\lambda\}u=P_4E(\{\lambda\}u=\dots=0\,.
\end{equation*}
Therefore $E(\{\lambda\})u=0$ for any $u\in\mathcal{H}$ which is a
contradiction since $\lambda$ is in the discrete spectrum of $J$. Thus
we have shown that $T\ne 0$ and then $T^*T\ne 0$.

Now, since $T^*T$ is a compact nonzero operator, one can take $\tau$
such that, in addition to (\ref{eq:tau-small}), satisfies
\begin{equation*}
  \frac{4\tau^2}{\dist(\lambda, \sigma(J)\setminus\{\lambda\})^2}
<\min\sigma(T^*T)\setminus\{0\}\,.
\end{equation*}
By our choice of $\tau$, the quadratic form in
(\ref{eq:quadratic-form}) cannot be negative. Indeed, if
$\sigma(T^*T)\setminus\{0\}=\{\mu_k\}_{k=1}^p$ ($p\in\nats$ since $\rank(T^*T)$ is
finite due to the fact that $\lambda\in\sigma_{\rm discr}(J)$), then
\begin{equation*}
  \mu_k^2-\frac{4\tau^2}{\dist(\lambda, \sigma(J)\setminus\{\lambda\})^2}\mu_k>0
\end{equation*}
for all $k=1,\dots,p$ in view of the fact that
$\min\sigma(T^*T)\setminus\{0\}>0$. This conclusion and
(\ref{eq:quadratic-form}) imply that
\begin{equation*}
  v_\tau\in\ker\left[(T^*T)^2-\frac{4\tau^2}{\dist(\lambda,
\sigma(J)\setminus\{\lambda\})^2}T^*T\right]\,.
\end{equation*}
Therefore $v_\tau\in\ker(T^*T)=\ker(T)$ which yields, recalling \eqref{eq:def-t}
and the fact that $\ker(L)=\{0\}$, that
$P_1v_\tau=0$.

We now show that $P_1v_\tau=0$ leads to $v_\tau=0$ which is a
contradiction to our assumption. Since $v_\tau$ is an eigenvector of
$J(\tau)$ at $\lambda$, it follows from \eqref{eq:difference-initial}
that
\begin{equation*}
  (B_1 +\tau L^*L)P_1v_\tau + A_1P_2v_\tau=\lambda P_1v_\tau
\end{equation*}
and therefore $P_2v_\tau=0$ since the kernel of $A_1$ is trivial.  As
before, having established that $P_1v_\tau,P_2v_\tau=0$, it follows by
iteration of \eqref{eq:difference-recurrence} that
\begin{equation*}
   P_3v_\tau=P_4v_\tau=\dots=0
\end{equation*}
since $\ker(A_m)=\{0\}$ for all $m\in\nats$.
\end{proof}
\begin{theorem}
\label{thm:infinite-interval-in-spectrum}
Assume that the operator $J$ given in Definition~\ref{def:j-nought} is
self-adjoint and bounded from below and $\ker(A_m)=\{0\}$ for all $m\in\nats$. Consider
real numbers $b$ and $\lambda$ such that $(-\infty,b)\cap\sigma_{ess}(J)=\emptyset$
and $\lambda\in(-\infty,b)\cap\sigma_p(J)$.  Fix
$\delta>0$ and $\epsilon\in(0,1)$.  If $u$
is an eigenvector corresponding to $\lambda$, normalized so that $\norm{u}=1$,
then
  \begin{equation*}
    \norm{u_m}_{\mathcal{K}}\le
    C\exp(-\gamma(\lambda)\sum_{k=1}^{m-1}
\phi_\delta(\norm{A_k}_{B(\mathcal{K})}))\,.
  \end{equation*}
  where $C$ does not depend on $m$ and the functions $\phi_\delta$
  and $\gamma$ are given in \eqref{eq:phi-function-def}
  and (\ref{eq:def-gamma}), respectively.
\end{theorem}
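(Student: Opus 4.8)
The plan is to reduce the statement to Theorem~\ref{thm:infinite-interval} by a carefully chosen perturbation that ejects $\lambda$ from the spectrum while leaving the off-diagonal entries $A_k$ — and hence the decay-governing functions $\phi_\delta$ and $\gamma(\lambda)$ — completely untouched. Choose a compact operator $L$ in $\mathcal{K}$ with trivial kernel and $\norm{L}_{B(\mathcal{K})}=1$ (such an $L$ plainly exists), and form $J(\tau)$ as in~\eqref{eq:t-epsilon}. Since $P_1L^*LP_1$ is a bounded, self-adjoint, nonnegative operator on $\mathcal{H}$, the operator $J(\tau)$ is self-adjoint on $\dom(J)$ and bounded from below; moreover $P_1L^*LP_1$ is compact, so Weyl's perturbation theorem gives $(-\infty,b)\cap\sigma_{ess}(J(\tau))=\emptyset$. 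The decisive feature is that the perturbation modifies only the first diagonal block, $B_1\mapsto B_1+\tau L^*L$, so that $J(\tau)$ is again a block Jacobi operator in the sense of Definition~\ref{def:j-nought} with the \emph{same} off-diagonal entries $\{A_m\}_{m=1}^\infty$. By Lemma~\ref{lem:perturbed-not-in-spectrum}, for all sufficiently small $\tau>0$ one has $\lambda\notin\sigma(J(\tau))$; fix one such $\tau$ from now on.

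Next I would rewrite the eigenvalue equation as a resolvent identity. Since $(J-\lambda I)u=0$, definition~\eqref{eq:t-epsilon} yields
\begin{equation*}
  (J(\tau)-\lambda I)u=\tau\,P_1L^*LP_1u\,.
\end{equation*}
As $\lambda$ now lies in the resolvent set of $J(\tau)$, inverting and projecting onto the $m$-th block gives, in the notation of Definition~\ref{def:green-function} applied to $J(\tau)$,
\begin{equation*}
  u_m=\tau\,G^{(\tau)}_{m1}(\lambda)\,L^*Lu_1\,,
\end{equation*}
where $G^{(\tau)}_{m1}(\lambda):=P_m(J(\tau)-\lambda I)^{-1}P_1\in B(\mathcal{K})$ and we have used that $P_1L^*LP_1u$ is supported on the first block with first component $L^*Lu_1$.

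Finally I would apply Theorem~\ref{thm:infinite-interval} to the operator $J(\tau)$ at the point $\lambda$ (here $\Re\lambda=\lambda<b$), with the index pair $j=m$, $k=1$. Since $\min(m,1)=1$ and $\max(m,1)-1=m-1$, this produces a constant $C'$ independent of $m$ with
\begin{equation*}
  \norm{G^{(\tau)}_{m1}(\lambda)}_{B(\mathcal{K})}\le
  C'\exp\Bigl(-\gamma(\lambda)\sum_{k=1}^{m-1}\phi_\delta(\norm{A_k}_{B(\mathcal{K})})\Bigr)\,,
\end{equation*}
in which $\gamma(\lambda)$ and $\phi_\delta$ are literally those of \eqref{eq:def-gamma} and \eqref{eq:phi-function-def}, because $J(\tau)$ shares the entries $A_k$ and the scalars $b,\delta,\epsilon,\lambda$ with $J$. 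Combining the last two displays and estimating $\norm{L^*Lu_1}_{\mathcal{K}}\le\norm{L}_{B(\mathcal{K})}^2\norm{u_1}_{\mathcal{K}}\le\norm{u}=1$ gives the asserted bound with $C:=\tau C'$, which is independent of $m$.

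The substantive point is the construction already carried out in Lemma~\ref{lem:perturbed-not-in-spectrum}: one needs a perturbation that is simultaneously (i) compact, so that the essential-spectrum gap persists, (ii) confined to the first block, so that the decay data $\{A_k\}$ and therefore $\gamma(\lambda)$ are unaltered, and (iii) genuinely capable of removing $\lambda$ from the spectrum. The hypotheses $\ker(A_m)=\{0\}$ for all $m$ and $\ker(L)=\{0\}$ are exactly what the lemma consumes to secure (iii); granting the lemma, the remainder is the bookkeeping above. The only mild care required is to verify that $J(\tau)$ falls verbatim under Definition~\ref{def:j-nought} — it does, with $B_1$ replaced by $B_1+\tau L^*L$ — so that Theorem~\ref{thm:infinite-interval} applies without modification and returns the same exponential rate.
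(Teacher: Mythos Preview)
Your proposal is correct and follows essentially the same approach as the paper: perturb $J$ by $\tau P_1L^*LP_1$, invoke Lemma~\ref{lem:perturbed-not-in-spectrum} to eject $\lambda$ from the spectrum, rewrite $u$ via the resolvent identity $u=\tau(J(\tau)-\lambda I)^{-1}P_1L^*LP_1u$, and apply Theorem~\ref{thm:infinite-interval} to $J(\tau)$ at the index pair $(m,1)$. Your write-up is, if anything, slightly more explicit than the paper's about why $J(\tau)$ remains a block Jacobi operator with the same off-diagonals $A_k$ (hence the same $\gamma(\lambda)$ and $\phi_\delta$), and about the existence of a suitable $L$.
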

\begin{proof}
  By Lemma~\ref{lem:perturbed-not-in-spectrum}, one can choose
  $\tau>0$ such that $\lambda\not\in\sigma(J(\tau))$. According to perturbation
  theory, $J(\tau)$ is bounded from below and
  $\sigma_{ess}(J)=\sigma_{ess}(J(\tau))$ (see \cite[Chp.\,4,
  Thm.\,5.35 and Chp.\,5, Thm.\,4.11]{MR0407617}). Thus, the estimates of
  Theorem~\ref{thm:infinite-interval} can be applied to $J(\tau)$.

  If $u$ is a nonzero vector in $\ker(J-\lambda I)$, then
  \begin{equation*}
    (J(\tau)-\lambda I)u=\tau P_1L^*LP_1u\,.
  \end{equation*}
Therefore,
\begin{equation}
\label{eq:eigenvector-throu-itself}
   u=\tau(J(\tau)-\lambda I)^{-1}P_1L^*LP_1u
\end{equation}
which in turn implies
\begin{align*}
  \norm{u_m}_{\mathcal{K}}&=\norm{P_mu}_{\mathcal{H}}\\ &=\norm{P_m\tau(J(\tau)-\lambda
                             I)^{-1}P_1L^*LP_1u}_{\mathcal{H}}\\
&\le\tau\norm{P_m(J(\tau)-\lambda I)^{-1}P_1}_{B(\mathcal{H})}
\norm{P_1u}_\mathcal{H}\\
&\le C\exp(-\gamma(\lambda)\sum_{k=1}^{m-1}
\phi_\delta(\norm{A_k}_{B(\mathcal{K})}))\norm{P_1u}_\mathcal{H}\,,
\end{align*}
where in the first inequality we use that
$\norm{L}_{B(\mathcal{K})}=1$ and in the second we resort to
Theorem~\ref{thm:infinite-interval}.
\end{proof}
\begin{remark}
  \label{rem:semi-bounded}
  The semi-boundedness of $J$ is essential. Indeed, there are examples
  of (scalar) Jacobi operators \cite{MR2579689}, where accumulation of
  $\sigma_{\rm discr}(J)$ at infinity is allowed, whose generalized eigenvalues
  has an asymptotic behavior depending on the main diagonal entries of
  the Jacobi operator. The estimates given in the previous theorem
  show that, in the semi-bounded case, the diagonal block entries do not
  play any role in the component-wise estimates of the generalized
  eigenvectors.
\end{remark}
Under additional conditions on the asymptotic behaviour of the
sequence $\{\norm{A_k}\}_{k=1}^\infty$, it is possible to simplify the
expression for the estimates in Theorems~\ref{thm:infinite-interval}
and \ref{thm:infinite-interval-in-spectrum}. This is done in the
following assertion.

\begin{corollary}
  \label{cor:norm-to-infty}
  Assume that the operator $J$ given in Definition~\ref{def:j-nought}
  is self-adjoint and bounded from below and that
  $\norm{A_k}\convergesto{k}\infty$. Let the real number $b$ be such
  that $(-\infty,b)\cap\sigma_{ess}(J)=\emptyset$. Fix $\epsilon$
  arbitrarily small in $(0,1)$.
\begin{enumerate}[(a)]
\item \label{not-in-spectrum-cor}
If $\lambda\not\in\sigma(J)$ and $\Re\lambda<b$, then
  \begin{equation*}
       \norm{G_{mj}(\lambda)}_{B(\mathcal{K})}\le
    C_a
   \exp(-(1-\epsilon)\sqrt{b-\Re\lambda}\!
  \sum\limits_{k=\min(m,j)}^{\max(m,j)-1}1/\sqrt{\norm{A_k}_{B(\mathcal{K})}})\,.
  \end{equation*}
\item \label{in-spectrum-cor}
Under the assumption that $\ker(A_k)=\{0\}$ for all
  $k\in\nats$, if $\lambda\in\sigma_p(J)\cap (-\infty,b)$ and $u$ is the corresponding
eigenvector, normalized so that $\norm{u}=1$,
then
  \begin{equation*}
    \norm{u_m}_{\mathcal{K}}\le
   C_b\exp(-(1-\epsilon)\sqrt{b-\lambda}
\sum_{k=1}^{m-1}1/\sqrt{\norm{A_k}_{B(\mathcal{K})}})\,.
  \end{equation*}
  \end{enumerate}
The constant $C_a$ does not
depend on $m$ and $j$, and $C_b$ does not depend on $m$.
\end{corollary}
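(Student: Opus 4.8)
The plan is to deduce both parts directly from Theorems~\ref{thm:infinite-interval} and \ref{thm:infinite-interval-in-spectrum}, exploiting the freedom to choose $\delta$ together with the hypothesis $\norm{A_k}\convergesto{k}\infty$. Two adjustments have to be made: the rate $\gamma(\lambda)$ furnished by those theorems must be shown to majorize $(1-\epsilon)\sqrt{b-\Re\lambda}$ for a suitable $\delta$, and the function $\phi_\delta$ in the exponent must be replaced by $\xi\mapsto 1/\sqrt{\xi}$.

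First I would analyse $\gamma(\lambda)=\sqrt{\delta}\,\psi^{-1}((1-\epsilon)(b-\Re\lambda)/\delta)$ as $\delta\to+\infty$. Putting $x=\psi^{-1}((1-\epsilon)(b-\Re\lambda)/\delta)$, so that $x^2e^{x}=(1-\epsilon)(b-\Re\lambda)/\delta$, one has $x\to 0^+$ as $\delta\to+\infty$, and solving for $\sqrt{\delta}$ gives the closed form $\gamma(\lambda)=\sqrt{(1-\epsilon)(b-\Re\lambda)}\,e^{-x/2}$. Hence $\gamma(\lambda)\to\sqrt{(1-\epsilon)(b-\Re\lambda)}$. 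Since $\sqrt{1-\epsilon}\ge 1-\epsilon$ for $\epsilon\in(0,1)$, this limit strictly dominates $(1-\epsilon)\sqrt{b-\Re\lambda}$, so there is a finite $\delta_0=\delta_0(\epsilon,b,\lambda)$ for which $\gamma(\lambda)\ge(1-\epsilon)\sqrt{b-\Re\lambda}$ whenever $\delta\ge\delta_0$; I fix one such $\delta$.

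With $\delta$ now fixed, the hypothesis $\norm{A_k}\convergesto{k}\infty$ yields an index $k_0$ such that $\norm{A_k}_{B(\mathcal{K})}\ge\delta$, and therefore $\phi_\delta(\norm{A_k}_{B(\mathcal{K})})=1/\sqrt{\norm{A_k}_{B(\mathcal{K})}}$, for all $k\ge k_0$. Consequently, in $\sum_{k=\min(m,j)}^{\max(m,j)-1}\phi_\delta(\norm{A_k}_{B(\mathcal{K})})$ only the finitely many terms with index below $k_0$ can differ from $1/\sqrt{\norm{A_k}_{B(\mathcal{K})}}$. Splitting the sum at $k_0$ and invoking $\gamma(\lambda)\ge(1-\epsilon)\sqrt{b-\Re\lambda}$ on the tail, the discrepancy between $\gamma(\lambda)\sum\phi_\delta(\norm{A_k}_{B(\mathcal{K})})$ and $(1-\epsilon)\sqrt{b-\Re\lambda}\sum 1/\sqrt{\norm{A_k}_{B(\mathcal{K})}}$ is bounded by a quantity depending only on $A_1,\dots,A_{k_0-1}$, hence independent of $m$ and $j$. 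Exponentiating, this extra factor folds into the constant, and part~(\ref{not-in-spectrum-cor}) follows from Theorem~\ref{thm:infinite-interval}. Part~(\ref{in-spectrum-cor}) is identical, now reading off the bound from Theorem~\ref{thm:infinite-interval-in-spectrum} with $\lambda$ real (so $\Re\lambda=\lambda$); here the standing assumption $\ker(A_k)=\{0\}$ ensures $\norm{A_k}_{B(\mathcal{K})}>0$, so the initial terms $1/\sqrt{\norm{A_k}_{B(\mathcal{K})}}$ are genuinely finite.

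The remaining computations are routine. The one point requiring care is the passage $\delta\to+\infty$ together with the elementary inequality $\sqrt{1-\epsilon}\ge 1-\epsilon$: it is precisely this slack that allows a single fixed $\delta$ to deliver the clean rate $(1-\epsilon)\sqrt{b-\Re\lambda}$ while simultaneously forcing $\phi_\delta$ to agree with $1/\sqrt{\,\cdot\,}$ on all but finitely many indices. I expect the only (minor) obstacle to be the bookkeeping of the finite block of low-index terms, uniformly in $j$ and $k$, so that they truly merge into the constants $C_a$ and $C_b$.
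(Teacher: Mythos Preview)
Your proposal is correct and follows essentially the same approach as the paper: choose $\delta$ large so that $\gamma(\lambda)$ approaches $\sqrt{(1-\epsilon)(b-\Re\lambda)}$ and hence dominates $(1-\epsilon)\sqrt{b-\Re\lambda}$, then use $\norm{A_k}\to\infty$ to replace $\phi_\delta(\norm{A_k})$ by $1/\sqrt{\norm{A_k}}$ up to finitely many indices whose contribution is absorbed into the constant. Your closed-form expression $\gamma(\lambda)=\sqrt{(1-\epsilon)(b-\Re\lambda)}\,e^{-x/2}$ is a slightly cleaner route to the asymptotics than the paper's explicit inequality $\psi^{-1}(t)\ge\sqrt{t}(1-\epsilon_2)$, but the substance is identical.
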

\begin{proof}
  By choosing $\delta$ appropriately (essentially sufficiently large),
  one obtains
  \begin{equation*}
    \frac{(b-\Re\lambda)(1-\epsilon)}{\delta}<\epsilon_1\ll 1\,.
  \end{equation*}
 Given $b$ and $\epsilon$, the choice of $\delta$ depends on
 $\lambda$.

The presence of the factor $(1-\epsilon)\sqrt{b-\Re\lambda}$ instead
of $\gamma(\lambda)$ follows from the fact that, if $0<t<\epsilon_1$, then
\begin{equation}
 \label{eq:correct-inequality-psi-}
  \psi^{-1}(t)\ge\sqrt{t}(1-\epsilon_2)\,,
\end{equation}
where $\epsilon_2$ is arbitrarily small whenever $\epsilon_1$
is sufficiently small. Indeed, if
(\ref{eq:correct-inequality-psi-}) holds, then the choice of $\gamma$
in \eqref{eq:def-gamma} can be replaced by
\begin{equation*}
  \gamma=\sqrt{\delta}\sqrt{\frac{(b-\Re\lambda)(1-\epsilon)}{\delta}}
 (1-\epsilon_2)=\sqrt{(b-\Re\lambda)(1-\widetilde{\epsilon})}\,,
\end{equation*}
where $\widetilde{\epsilon}$ is arbitrarily small.

Let us show that (\ref{eq:correct-inequality-psi-}) holds. From
\eqref{eq:psi-function-def}, if $t=ye^{\sqrt{y}}$, one has
\begin{equation*}
  \psi^{-1}(t)=\sqrt{t}e^{-\frac{1}{2}\sqrt{y}}=
  \sqrt{t}\exp(-\frac{1}{2}\sqrt{t}e^{-\frac{1}{2}\sqrt{y}})
\ge \sqrt{t}e^{-\frac12\sqrt{t}}\,.
\end{equation*}
Thus, if $t<\epsilon_1$, then
\begin{equation*}
  \psi^{-1}(t)\ge\sqrt{t}e^{-\frac12\sqrt{\epsilon_1}}=\sqrt{t}(1-\epsilon_2)\,.
\end{equation*}
Finally, observe that
\begin{align*}
  \sum_{k=1}^m\phi_\delta(\norm{A_k})&=
 \sum_{\substack{\left\| A_k\right\|\le\delta\\ k
\le m}}\frac{1}{\sqrt{\delta}}+
\sum_{\substack{\left\| A_k\right\|>\delta\\ k
\le m}}\frac{1}{\sqrt{\norm{A_k}}}\\
&=\sum_{k=1}^m \frac{1}{\sqrt{\norm{A_k}}} +
 \sum_{\substack{\left\| A_k\right\|\le\delta\\ k
\le m}}\frac{1}{\sqrt{\delta}}-
\sum_{\substack{\left\| A_k\right\|\le\delta\\ k
\le m}}\frac{1}{\sqrt{\norm{A_k}}}
\end{align*}
where the second and third terms can be absorbed into the constant
which does not depend on $m\gg 1$ since $\norm{A_k}\convergesto{k}\infty$.
\end{proof}

\begin{remark}
  \label{rem:sharpness-1}
  The choice of the function $\gamma$ in
  Theorems~\ref{thm:infinite-interval} and
  \ref{thm:infinite-interval-in-spectrum} is rather
  optimal. If one assumes that $b$ is the border of the
  essential spectrum of $J$, then the behavior of $\gamma(\lambda)$,
  when $\lambda$ approaches $b$, is the most interesting in
  applications. Formula (\ref{eq:psi-function-def}) leads to the following
  asymptotic behavior
  \begin{equation}
    \label{eq:gamma-asymptotic}
    \gamma(\lambda)\simeq\sqrt{(b-\Re\lambda)(1-\epsilon)}
  \end{equation}
  as $\lambda\to b$. Note that $\sqrt{1-\epsilon}$ may be chosen arbitrarily
  close to $1$, but it has been shown in the case of scalar Jacobi
  matrices \cite{MR3028179} that the coefficient $(1-\epsilon)$ cannot
  be replaced by any number greater than $1$, although arbitrarily close to
  $1$. This proves the sharpness of the estimates of
  Theorems~\ref{thm:infinite-interval} and
  \ref{thm:infinite-interval-in-spectrum}. Namely, the factor
  $(1-\epsilon)$ in the definiton of $\gamma$  (see (\ref{eq:def-gamma})) cannot be replaced by $(1+\epsilon)$ for any
  arbitrary small $\epsilon>0$.
\end{remark}

\section{Estimates of generalized eigenvectors in the case of
  commuting entries}
\label{sec:case-comm-entr}

If in Theorems~\ref{thm:infinite-interval} and
\ref{thm:infinite-interval-in-spectrum} one additionally requires
that the matrices $A_m,B_m,A_m^*$ commute for
$m\in\nats$, then a refinement of the previous results occurs
so that it is possible to estimate the growth of the generalized
eigenvectors along any spatial ``direction'' in $\mathcal{K}$. The
refined results are given in the next assertion, where Notation~(\ref{not-5}) is used.
We draw the reader's attention to the fact that in this
  section we use functions of operators given by the functional
  calculus of self-adjoint operators.

\begin{theorem}
  \label{thm:infinite-interval-commutation}
Assume that the operator $J$ given in Definition~\ref{def:j-nought}
  is self-adjoint and bounded from below and the system of operators
    \begin{equation*}
      \{A_m,B_m,A^*_m\}_{m\in\nats}
    \end{equation*}
commutes pairwise. Take a real
  number $b$ such that
  $(-\infty,b)\cap\sigma_{ess}(J)=\emptyset$ and consider a complex
  number $\lambda$ with $\Re\lambda<b$. Fix $\delta>0$ and
  $\epsilon$ arbitrarily small in $(0,1)$.
  \begin{enumerate}[(i)]
  \item \label{not-in-spectrum-commutation}
If $\lambda\not\in\sigma(J)$ and $\Re\lambda<b$, then
    \begin{equation}
 \label{eq:estimate-green-commutation}
 \norm{\exp(\gamma(\lambda)\!\!\sum
\limits_{k=\min(m,j)}^{\max(m,j)-1}\phi_\delta(\abs{A_k}))G_{mj}(\lambda)}
\le C\,.
\end{equation}
\item \label{in-spectrum-commutation}
Under the assumption that $\ker(A_k)=\{0\}$ for all
  $k\in\nats$, if $\lambda\in\sigma_p(J)\cap (-\infty,b)$ and $u$ is the corresponding
  eigenvector, normalized so that $\norm{u}_{\mathcal{H}}=1$, then
  \begin{equation}
    \label{eq:estimate-eigen-communtation}
    \norm{\exp(\gamma(\lambda)\sum_{k=1}^{m-1}\phi_\delta(\abs{A_k}))u_m}_{\mathcal{K}}\le \widetilde{C}\,.
  \end{equation}
\end{enumerate}
In both (\ref{not-in-spectrum-commutation}) and
(\ref{in-spectrum-commutation}), $\phi_\delta$ and $\gamma$ are given by
(\ref{eq:phi-function-def}) and (\ref{eq:def-gamma}), respectively. The constant $C$ does not
depend on $m$ and $j$, and $\widetilde{C}$ does not depend on $m$.
\end{theorem}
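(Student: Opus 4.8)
The plan is to rerun the proof of Theorem~\ref{thm:infinite-interval} essentially verbatim, replacing the scalar weights by operator weights built from the functional calculus of the positive operators $\abs{A_k}=(A_k^*A_k)^{1/2}$. For fixed $N$ I would set
\begin{equation*}
  \Phi_m:=
  \begin{cases}
    \exp\!\left(-\gamma\sum_{k=1}^{m-1}\phi_\delta(\abs{A_k})\right), & m\le N,\\[1mm]
    \exp\!\left(-\gamma\sum_{k=1}^{N-1}\phi_\delta(\abs{A_k})\right), & m> N,
  \end{cases}
\end{equation*}
where $\phi_\delta(\abs{A_k})$ is defined by the functional calculus. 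Since $\phi_\delta$ is a bounded Borel function, each $\phi_\delta(\abs{A_k})$ is a positive operator of norm at most $1/\sqrt\delta$, and the pairwise commutativity makes the summands commute, so $\Phi_m$ is a positive, boundedly invertible contraction. I would then put $\Phi=\diag\{\Phi_m\}$ and keep all remaining notation ($K$, $M$, $F$) as in the proof of Theorem~\ref{thm:infinite-interval}.

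The commutation hypothesis enters at two decisive points. First, because $B_m$ commutes with $\Phi_m$, one still has $\Phi_m^{-1}B_m\Phi_m=B_m$, so the diagonal of $J$ contributes nothing to $F=\Phi^{-1}J\Phi-J$; and because each $A_m$ is normal and commutes with $\abs{A_m}$, the off-diagonal blocks collapse to $\Phi_m^{-1}A_m\Phi_{m+1}=A_m\exp(-\gamma\phi_\delta(\abs{A_m}))$ and its adjoint counterpart. Computing $2\Re F=F+F^*$ exactly as before, the relevant block is $A_m G_m$ with $G_m:=\exp(\gamma\phi_\delta(\abs{A_m}))-2I+\exp(-\gamma\phi_\delta(\abs{A_m}))\ge0$, and normality of $A_m$ gives $\norm{A_m G_m}_{B(\mathcal{K})}=\norm{\abs{A_m}G_m}_{B(\mathcal{K})}$. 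Applying the spectral theorem together with the elementary inequality \eqref{eq:algebraic-inequality-exp} yields
\begin{equation*}
  \norm{\Re F}_{B(\mathcal{H})}\le\sup_{m}\ \sup_{x\in\sigma(\abs{A_m})}x\,\psi(\gamma\phi_\delta(x)),
\end{equation*}
and the choice of $\gamma$ in \eqref{eq:def-gamma}, through \eqref{eq:aux-inequality}, bounds the right-hand side by $(1-\epsilon)(b-\Re\lambda)$. From here the argument is word for word that of Theorem~\ref{thm:infinite-interval}: the accretivity estimate, the triviality of $\ker(I+K\widetilde{P}_M(J+F-K\widetilde{P}_M-\lambda I)^{-1})$, and \eqref{eq:aux-inversion} give $\norm{\Phi^{-1}(N)(J-\lambda I)^{-1}\Phi(N)}_{B(\mathcal{H})}\le C$ uniformly in $N$, i.e.\ $\norm{\Phi_m^{-1}G_{mj}\Phi_j}_{B(\mathcal{K})}\le C$.

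The new step needed to pass from this two-sided weighted bound to the one-sided statement \eqref{eq:estimate-green-commutation} is the observation that $\abs{A_k}$ commutes with every Green block $G_{mj}$. This holds because any $T\in B(\mathcal{K})$ commuting with all $A_m,B_m,A_m^*$ makes $\diag\{T\}\in B(\mathcal{H})$ commute with the representation \eqref{eq:representation-j} of $J$, hence with $(J-\lambda I)^{-1}$, whence $TG_{mj}=G_{mj}T$; applying this to $T=\phi_\delta(\abs{A_k})$ lets me move $\Phi_j$ past $G_{mj}$ and merge the two exponentials, so that for $m\ge j$
\begin{equation*}
  \Phi_m^{-1}G_{mj}\Phi_j=\exp\!\left(\gamma\sum_{k=j}^{m-1}\phi_\delta(\abs{A_k})\right)G_{mj},
\end{equation*}
which is precisely the operator inside the norm in \eqref{eq:estimate-green-commutation}. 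Letting $N\to\infty$ finishes part~(\ref{not-in-spectrum-commutation}), the case $m<j$ following from the adjoint symmetry \eqref{eq:norm-resolvent-adjoint} together with $\Re\cc{\lambda}=\Re\lambda$.

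For part~(\ref{in-spectrum-commutation}) I would imitate the proof of Theorem~\ref{thm:infinite-interval-in-spectrum}: use Lemma~\ref{lem:perturbed-not-in-spectrum} to displace $\lambda$ out of $\sigma(J(\tau))$ with $J(\tau)=J+\tau P_1L^*LP_1$, then apply part~(\ref{not-in-spectrum-commutation}) with $j=1$ to $J(\tau)$ and \eqref{eq:eigenvector-throu-itself}, bounding $\norm{L}_{B(\mathcal{K})}=1$; note that the off-diagonal entries of $J(\tau)$ coincide with those of $J$, so the weight is still the one built from $\abs{A_k}$. The one delicate point is keeping $J(\tau)$ inside the commuting class, and I would arrange this by choosing $L$ in Lemma~\ref{lem:perturbed-not-in-spectrum} so that $L^*L$ lies in the commutant of $\{A_m,B_m,A_m^*\}$ (automatic when $\dim\mathcal{K}<\infty$ by taking $L=I$), which also preserves the commutation of $\abs{A_k}$ with the perturbed resolvent blocks. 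The main obstacle I anticipate is exactly this bookkeeping in the commuting class: verifying that the operator weights are legitimate boundedly invertible contractions, that $F$ retains its clean two-diagonal form, and that $\abs{A_k}$ genuinely commutes with the resolvent blocks; once these are in place, the estimates of Theorem~\ref{thm:infinite-interval} transfer with no further change.
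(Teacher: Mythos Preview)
Your argument for part~(\ref{not-in-spectrum-commutation}) is essentially the paper's own proof: the same operator-valued weights $\Phi_m=\exp(-\gamma\sum_{k<m}\phi_\delta(\abs{A_k}))$, the same reduction of $\norm{\Re F}$ via the spectral theorem, and the same appeal to the machinery of Theorem~\ref{thm:infinite-interval}. Your explicit justification for ``combining the operators on both sides of $G_{jk}$''---namely that $\diag\{T\}$ commutes with $J$ whenever $T$ lies in the commutant of $\{A_m,B_m,A_m^*\}$, hence with the resolvent, hence $TG_{mj}=G_{mj}T$---is a point the paper states without elaboration, so you are simply filling in the same step more carefully.

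For part~(\ref{in-spectrum-commutation}) you have spotted a genuine imprecision, but your proposed repair is stronger than required and can fail when $\dim\mathcal{K}=\infty$. Demanding that $L^*L$ lie in the commutant of $\{A_m,B_m,A_m^*\}$ is not always possible: if that commutant happens to be a maximal abelian von Neumann algebra with continuous spectral measure (say multiplication operators on $L^2[0,1]$), it contains no nonzero compact operator, so no admissible $L$ exists. The cleaner observation, which makes the paper's ``$J(\tau)$ satisfies the hypothesis of~(\ref{not-in-spectrum-commutation})'' work regardless of the choice of $L$, is that $\Phi_1=I$ (the sum $\sum_{k=1}^{0}$ is empty). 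Consequently the perturbation $\tau P_1L^*LP_1$ is left invariant by conjugation with $\Phi$, so $F$ is literally unchanged for $J(\tau)$, and the two-sided bound $\norm{\Phi_m^{-1}G^{(\tau)}_{mj}\Phi_j}\le C$ follows exactly as in~(\ref{not-in-spectrum-commutation}) without any commutation hypothesis on $L$. Since you only need the case $j=1$ in \eqref{eq:eigenvector-throu-itself}, and $\Phi_1=I$, no ``combining'' step is required at all: $\Phi_m^{-1}G^{(\tau)}_{m1}\Phi_1=\exp(\gamma\sum_{k=1}^{m-1}\phi_\delta(\abs{A_k}))G^{(\tau)}_{m1}$ is already the operator you want. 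So drop the commutant condition on $L$ and instead invoke $\Phi_1=I$; the rest of your plan then goes through verbatim and matches the paper's argument.
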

\begin{proof}
  First, we prove (\ref{not-in-spectrum-commutation}). We consider
  again the operators $J_b$ and $K$ defined in the proof of
  Theorem~\ref{thm:infinite-interval}, but modify the definition of the
  operators $\Phi$. For any fixed $N\in\nats$, define the bounded
  operators on $\mathcal{K}$
  \begin{equation}
    \label{eq:phi-m-new-def}
      \Phi_m:=
\begin{cases}
\exp\left(-\gamma\sum_{k=1}^{m-1}\phi_\delta(\abs{A_k})\right)\,,&
m\le N\,,\\
\exp\left(-\gamma\sum_{k=1}^{N-1}\phi_\delta(\abs{A_k})\right)\,,&
m> N\,,
\end{cases}
  \end{equation}
and the bounded operator on $\mathcal{H}$ by
\begin{equation*}
  \Phi:=\diag\{\Phi_m\}_{m=1}^\infty
\end{equation*}
Similar to what we had in the proof of
Theorem~\ref{thm:infinite-interval}, $\Phi$ depends on $N$ and
$\Phi(N)$ is a boundedly invertible contraction for any finite
$N$. Note that this time the block operator $\Phi_m$ is not a scalar
operator.

Consider the operator $F\in B(\mathcal{H})$ such that
(\ref{eq:relation-f-phi}) is satisfied with our new $\Phi$. Repeating
the argumentation in the proof of Theorem~\ref{thm:infinite-interval},
one arrives at (\ref{eq:real-part-first-estimate}). Using
(\ref{eq:phi-m-new-def}) and the fact that the system
$\{A_m,B_m,A^*_m\}_{m\in\nats}$ commutes, one obtains from
(\ref{eq:real-part-first-estimate}) that
\begin{equation}
 \label{eq:inequality-real-part-modulus-operator}
    \norm{\Re F}_{B(\mathcal{H})}\le
\sup_{m\in\nats}
\left\{\norm{\abs{A_m}\left(e^{-\gamma\phi_{\delta}(\abs{A_m})}
-2I+e^{\gamma\phi_\delta(\abs{A_m})}\right)}_{B(\mathcal{K})}\right\}\,.
\end{equation}
Due to the inequality
\begin{equation*}
  e^X-2 I+e^{-X}\le X^2e^X
\end{equation*}
valid for any positive operator $X$ and obtained from
(\ref{eq:algebraic-inequality-exp}) by the spectral theorem, one derives
from (\ref{eq:inequality-real-part-modulus-operator}) the estimate
\begin{equation*}
   \norm{\Re F}_{B(\mathcal{H})}\le
\sup_{m\in\nats}
\left\{\norm{\abs{A_m}\gamma^2\phi_\delta^2(\abs{A_m})
e^{\gamma\phi_\delta(\abs{A_m})}}\right\}\,.
\end{equation*}
But
\begin{equation*}
  \sup_{m\in\nats}
\left\{\norm{\abs{A_m}\gamma^2\phi_\delta^2(\abs{A_m})
e^{\gamma\phi_\delta(\abs{A_m})}}\right\}\le
 \gamma^2e^{\gamma/\sqrt{\delta}}
\end{equation*}
since, again by the spectral theorem and the definition of
$\phi_\delta$ given in \eqref{eq:phi-function-def}, one has
\begin{equation*}
  \abs{A_m}\phi_\delta^2(\abs{A_m})\le I\,.
\end{equation*}
Following the reasoning of the proof of
Theorem~\ref{thm:infinite-interval}, one verifies that
(\ref{eq:realpart-c}) holds as long as $\gamma$ is given by
(\ref{eq:def-gamma}).  The rest of the proof repeats the one of
Theorem~\ref{thm:infinite-interval} up to
(\ref{eq:bounded-by-constant}) from which, in view of
(\ref{eq:phi-m-new-def}), one obtains
\begin{equation*}
    \norm{\exp\left(\gamma\sum_{m=1}^{j-1}\phi_\delta(\abs{A_m})\right)
G_{jk}(\lambda)
\exp\left(-\gamma\sum_{m=1}^{k-1}\phi_\delta(\abs{A_m})\right)}_{B(\mathcal{K})}
\le C\,.
\end{equation*}
The assertion (\ref{not-in-spectrum-commutation}) follows from this
inequality by combining the operators on both sides of $G_{jk}$ and
letting $N\to\infty$. In this proof, $j\ge k$, but the other case is
also covered by recurring to \eqref{eq:norm-resolvent-adjoint}.

To prove (\ref{in-spectrum-commutation}), one resorts to
Lemma~\ref{lem:perturbed-not-in-spectrum} and choose $\tau>0$ so that
$\lambda\not\in\sigma(J(\tau))$. As in the proof of
Theorem~\ref{thm:infinite-interval-in-spectrum}, it follows from
(\ref{eq:t-epsilon}) that if $u$ is in $\ker(J-\lambda I)$, then
(\ref{eq:eigenvector-throu-itself}) holds. Hence
\begin{equation*}
  \norm{u_m}_{\mathcal{K}}
\le\tau\norm{P_m(J(\tau)-\lambda I)^{-1}P_1}_{B(\mathcal{H})}
\norm{P_1u}_\mathcal{H}\,.
\end{equation*}
For finishing the proof, it only remains to note that $J(\tau)$
satisfies the hypothesis of (\ref{not-in-spectrum-commutation}).
\end{proof}
\begin{remark}
  \label{rem:use-of-qualified-estimate}
  Qualified estimates of the constants
    $C$ and $\widetilde{C}$ given in
    \eqref{eq:estimate-green-commutation} and
    \eqref{eq:estimate-eigen-communtation}, respectively, can be
    obtained by following the reasoning of
    Remark~\ref{rem:qualified-estimate} with the scalar
    $\phi_\delta(\norm{A_k})$ substituted by the operator
    $\phi_\delta(\abs{A_k})$ for all $k\in\nats$. Note that
the operator
\begin{equation*}
\exp(\gamma(\lambda)\!\!\sum\limits_{k=\min(m,j)}^{\max(m,j)-1}
\phi_\delta(\abs{A_k}))
\end{equation*}
in (\ref{eq:estimate-green-commutation}) and
(\ref{eq:estimate-eigen-communtation}) governs the growth of the
generalized eigenvectors.
\end{remark}
\begin{corollary}
  \label{cor:norm-to-infty-commutation}
  Let $J$ be the operator given in Definition~\ref{def:j-nought} such
  that it is self-adjoint, bounded from below, and the operators $A_m$ and $B_m$ satisfy the conditions of
  Theorem~\ref{thm:infinite-interval-commutation} for all
  $m\in\nats$. Assume, additionally that
  $\norm{A_m^{-1}}\convergesto{m}0$. Fix an $\epsilon\in(0,1)$.
\begin{enumerate}[a)]
\item \label{not-in-spectrum-cor-commutation}
If $\lambda\not\in\sigma(J)$, then
  \begin{equation*}
       \norm{\exp((1-\epsilon)\sqrt{b-\Re\lambda}\!
  \sum\limits_{k=\min(m,j)}^{\max(m,j)-1}1/\sqrt{\abs{A_k}})G_{mj}(\lambda)}\le
    C_a\,.
  \end{equation*}
\item \label{in-spectrum-cor-commutation}
If $\lambda\in\sigma_p(J)$ and $u$ is the corresponding
  eigenvector, normalized so that $\norm{u}_{\mathcal{H}}=1$, then
  \begin{equation*}
    \norm{\exp((1-\epsilon)\sqrt{b-\Re\lambda}
\sum_{k=1}^{m-1}1/\sqrt{\abs{A_k}})u_m}_{\mathcal{K}}\le
   C_b\,.
  \end{equation*}
  \end{enumerate}
The constant $C_a$ does not
depend on $m$ and $j$, and $C_b$ does not depend on $m$.
\end{corollary}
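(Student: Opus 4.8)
The plan is to derive the corollary from Theorem~\ref{thm:infinite-interval-commutation} by carrying out, now at the operator level through the functional calculus of the commuting family $\{\abs{A_k}\}_{k\in\nats}$, the same two simplifications used in the proof of Corollary~\ref{cor:norm-to-infty}. First I would dispose of the scalar prefactor: fixing $b$ and $\epsilon$, I choose $\delta$ so large that $(b-\Re\lambda)(1-\epsilon)/\delta<\epsilon_1\ll 1$ and invoke \eqref{eq:correct-inequality-psi-}, i.e.\ $\psi^{-1}(t)\ge\sqrt{t}(1-\epsilon_2)$ for small $t$, exactly as in Corollary~\ref{cor:norm-to-infty}. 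Since $\gamma(\lambda)$ in \eqref{eq:def-gamma} is a \emph{scalar}, this argument transfers verbatim and replaces $\gamma(\lambda)$ by $(1-\epsilon)\sqrt{b-\Re\lambda}$ (after relabelling $\epsilon$), with nothing to adjust for the matrix character of the entries.

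The genuinely new ingredient is to replace the operator $\phi_\delta(\abs{A_k})$ by $\abs{A_k}^{-1/2}=1/\sqrt{\abs{A_k}}$ up to a bounded, index-independent correction. Consider $g_\delta(x):=\phi_\delta(x)-x^{-1/2}$, which by \eqref{eq:phi-function-def} vanishes on $[\delta,\infty)$ and equals $1/\sqrt{\delta}-1/\sqrt{x}$ on $(0,\delta)$. The hypothesis $\norm{A_k^{-1}}\convergesto{k}0$ forces $\min\sigma(\abs{A_k})=\norm{A_k^{-1}}^{-1}\to\infty$, so there is $K_0$, depending only on $\delta$, with $\sigma(\abs{A_k})\subset(\delta,\infty)$ for all $k>K_0$; by the spectral theorem $g_\delta(\abs{A_k})=0$ for every such $k$. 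Each $\abs{A_k}$ being boundedly invertible, every $g_\delta(\abs{A_k})$ is bounded, and thus $\sum_k g_\delta(\abs{A_k})$ reduces to a finite sum of at most $K_0$ terms, whose norm is bounded by a quantity independent of $m$ and $j$.

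Finally I would assemble the two reductions. Put $c:=(1-\epsilon)\sqrt{b-\Re\lambda}$. Because all $\abs{A_k}$ commute pairwise (they are functions of the pairwise commuting $A_k,A_k^*$), the functions $\phi_\delta(\abs{A_k})$ and $g_\delta(\abs{A_k})$ form a commuting family, so
\begin{equation*}
  \exp\!\Big(c\sum_k\tfrac{1}{\sqrt{\abs{A_k}}}\Big)
  =\exp\!\Big(c\sum_k\phi_\delta(\abs{A_k})\Big)\,
   \exp\!\Big(-c\sum_k g_\delta(\abs{A_k})\Big)\,,
\end{equation*}
in which the last sum has at most $K_0$ nonzero terms. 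The second exponential is a fixed bounded operator, independent of $m$ and $j$; peeling it off from the left by submultiplicativity reduces the desired bounds to $\norm{\exp(c\sum_k\phi_\delta(\abs{A_k}))G_{mj}(\lambda)}\le C$ and $\norm{\exp(c\sum_k\phi_\delta(\abs{A_k}))u_m}_{\mathcal{K}}\le\widetilde C$, which are precisely \eqref{eq:estimate-green-commutation} and \eqref{eq:estimate-eigen-communtation} of Theorem~\ref{thm:infinite-interval-commutation} after the scalar replacement of $\gamma(\lambda)$ by $c$. This yields (\ref{not-in-spectrum-cor-commutation}) and (\ref{in-spectrum-cor-commutation}). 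I expect no serious obstacle: the $\gamma$-step is identical to the already-proven scalar Corollary~\ref{cor:norm-to-infty}, and the one operator-theoretic point—that the correction $\sum_k g_\delta(\abs{A_k})$ is bounded and index-independent—is exactly where the hypothesis $\norm{A_k^{-1}}\to 0$ (the operator analogue of $\norm{A_k}\to\infty$) is used, guaranteeing that $g_\delta(\abs{A_k})$ eventually vanishes. Note in particular that no commutation of $G_{mj}(\lambda)$ with $\abs{A_k}$ is required, since the bounded correction sits on the same side as the exponential weight and is removed purely by the submultiplicativity of the norm.
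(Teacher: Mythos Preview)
Your proposal is correct and follows essentially the same two-step strategy as the paper's proof: first the scalar replacement of $\gamma(\lambda)$ by $(1-\epsilon)\sqrt{b-\Re\lambda}$ exactly as in Corollary~\ref{cor:norm-to-infty}, then the operator-level substitution of $\phi_\delta(\abs{A_k})$ by $\abs{A_k}^{-1/2}$ via commutativity and the hypothesis $\norm{A_k^{-1}}\to 0$. Your formulation through the auxiliary function $g_\delta(x)=\phi_\delta(x)-x^{-1/2}$, together with the observation that $g_\delta(\abs{A_k})=0$ for $k>K_0$ and the explicit factorization of the exponential, is in fact a slightly cleaner packaging of the paper's correction term $\sum_{\abs{A_k}\le\delta I}\bigl(\delta^{-1/2}I-\abs{A_k}^{-1/2}\bigr)$, and it makes the direction of the final inequality transparent.
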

\begin{proof}
  We prove the claim in (b). The
  assertion (\ref{not-in-spectrum-cor-commutation}) is proven
  analogously.  We repeat part of the argumentation of the proof of
  Corollary~\ref{cor:norm-to-infty}. After having shown that
  $\gamma(\lambda)$ can be substituted by
  $(1-\epsilon)\sqrt{b-\Re\lambda}$, one arrives at
  \begin{align*}
    &\norm{\exp((1-\epsilon)\sqrt{b-\Re\lambda}
\sum_{k=1}^{m-1}\phi_\delta(\abs{A_k}))u_m}_{\mathcal{K}}\\ &\le
\norm{\exp\left[(1-\epsilon)\sqrt{b-\Re\lambda}\left(
\sum_{k=1}^{m-1}(\abs{A_k})^{-\frac{1}{2}} +
\sum_{\substack{\abs{A_k}\le\delta I\\ k
\le m}}\left(\frac{1}{\sqrt{\delta}}I -
(\abs{A_k})^{-\frac{1}{2}}
\right)\right)\right]u_m}_{\mathcal{K}}\\
&\le \widetilde{C}
\norm{\exp((1-\epsilon)\sqrt{b-\Re\lambda}
\sum_{k=1}^{m-1}(\abs{A_k})^{-\frac{1}{2}})u_m}_{\mathcal{K}}\,,
  \end{align*}
where in passing to the last inequality, we have used the pairwise
commutativity of the elements of the sequence
$\{A_k\}_{k=1}^\infty$. Note also that, under the assumption that $m$
is sufficiently large, the operator
\begin{equation*}
  \sum_{\substack{\abs{A_k}\le\delta I\\ k
\le m}}\left(\frac{1}{\sqrt{\delta}}I -
(\abs{A_k})^{-\frac{1}{2}}
\right)
\end{equation*}
is uniformly (with respect to $m$) bounded.
\end{proof}

\begin{remark}
  \label{rem:sharpness-2}
  The form of the estimates given in
  Theorem~\ref{thm:infinite-interval-commutation} are optimal. Taking
  into consideration that $\phi_\delta(x)=1/\sqrt{x}$ when $x>\delta$
  and $\gamma(\lambda)\simeq \sqrt{(b-\Re\lambda)(1-\epsilon)}$ as
  $\lambda$ approaches $b$, one proves the estimate sharpness as in
  the scalar case $d=1$ repeating the reasoning given in
  Remark~\ref{rem:sharpness-1}. However the sharpness of Theorem
  \ref{thm:infinite-interval-commutation} has a deeper character even
  in the ``trivial'' case where our block Jacobi matrix $J$ is the
  orthogonal sum of $d$ different copies of scalar Jacobi matrices
  (all entries are diagonal matrices). Indeed, in this case,
  Theorem~\ref{thm:infinite-interval-commutation} provides us with a
  sharp estimate for each scalar copy separately. Note that in the
  commuting case the reduction of $J$ to the orthogonal sum of $d$
  copies of scalar Jacobi matrices generally cannot be performed
  effectively. A possible exception is the very special case when the matrix
  entries are scalar proportional to some fixed commuting matrices.
\end{remark}

\section{An example with noncommuting entries}
\label{sec:an-example-noncommuting}
This example corresponds to the case $d=2$. For $\alpha \in (0,1)$, define
\begin{equation*}
  A_n=A_n^*:=
  \begin{pmatrix}
    0&r_n\\
    r_n&0
  \end{pmatrix}\,,\qquad r_n=n^\alpha\,,
\end{equation*}
and
\begin{equation*}
  B_n:=
  \begin{pmatrix}
    s_n&0\\
    0&t_n
  \end{pmatrix}\,,\qquad s_n=sn^\alpha\,, \quad t_n=tn^\alpha\,,\quad s,t>0\,.
\end{equation*}
By the moment, we consider arbitrary values for $s$ and $t$. Later, we
will impose extra conditions on them.

Note that the block Jacobi operator $J$ whose
matrix representation is \eqref{eq:block-jm} with the entries given
above cannot be reduced to a orthogonal sum of scalar Jacobi matrices
because $A_n$, $B_n$ do not commute if $s\ne t$.
Let
\begin{equation*}
  \mathcal{B}_n:=
  \begin{pmatrix}
    0& I \\
    -A_n^{-1}A_{n-1}&A_n^{-1}(\lambda I_2-B_n)
  \end{pmatrix}
\,,\quad\lambda\in\reals\,,
\end{equation*}
be the transfer matrix of $J$ associated with sequences $A_n$,
$B_n$. Here $I$ is the unit matrix in $\mathcal{K}=\complex^2$ (recall
the notation given in Section~\ref{sec:block-jacobi}). For the
spectral analysis of $J$ we need to find asymptotic formulae for
the eigenvalues of $\mathcal{B}_n$. First recall that a necessary and
sufficient condition for the invertibility of the $4\times 4$ matrix
$\mathcal{B}_n-\mu I$ is the invertibility of the $2\times 2$ matrix
\begin{equation*}
  X_n:=A_n^{-1}(\lambda I -B_n)-\mu I-\left[-A_n^{-1}A_{n-1}(-\mu^{-1}I)\right]\,.
\end{equation*}
This is clear if one uses the Schur-Frobenius complement. Multiplying $X_n$
by $\mu$, one has
\begin{equation*}
  \mu X_n=
  \begin{pmatrix}
    -\mu^2-1+O(\frac{1}{n})& \mu(\frac{\lambda}{n^\alpha}-t)\\
    \mu(\frac{\lambda}{n^\alpha}-s)& -\mu^2-1+O(\frac{1}{n})
  \end{pmatrix}
\end{equation*}
for $n$ sufficiently large. Moreover, the diagonal elements of $\mu
X_n$ are equal and $\det(\mu X_n)$ vanishes if and only if
\begin{equation*}
  (-\mu^2-1+O(n^{-1}))^2-
  \mu^2\left(\frac{\lambda}{n^\alpha}-
    t\right)\left(\frac{\lambda}{n^\alpha}-s\right)
  =0\,.
\end{equation*}
The last equation is equivalent to
\begin{equation*}
  \mu^2+1+O(n^{-1})=\pm\mu\sqrt{\left(\frac{\lambda}{n^\alpha}-
    t\right)\left(\frac{\lambda}{n^\alpha}-s\right)}\,,
\end{equation*}
which yields the following four eigenvalues of $\mathcal{B}_n$
corresponding to the four possible choices of signs $+$, $-$ below.
\begin{equation*}
  \mu_n=\mp\frac12\sqrt{\left(\frac{\lambda}{n^\alpha}-
      t\right)\left(\frac{\lambda}{n^\alpha}-s\right)}\pm
  \sqrt{\frac{1}{4}\left(\frac{\lambda}{n^\alpha}-
    t\right)\left(\frac{\lambda}{n^\alpha}-s\right)-1+O\left(\frac1n\right)}\,.
\end{equation*}
Note that the $O(n^{-1})$ terms are all real.  If one chooses
$\alpha\in(\frac12,1)$, then, in the special case $st=4$, the last
formula can be written as
\begin{align}
  \label{eq:eigenvalues-transfer-matrix}
  \mu_n&=\mp\frac12\sqrt{\left(\frac{\lambda}{n^\alpha}-
         t\right)\left(\frac{\lambda}{n^\alpha}-s\right)}\pm
         \sqrt{-\frac{\lambda(s+t)}{4n^\alpha}+
         O\left(\frac{1}{n}\right)}\nonumber\\
  &=\mp\frac12\sqrt{\left(\frac{\lambda}{n^\alpha}-
    t\right)\left(\frac{\lambda}{n^\alpha}-s\right)}\pm
    \frac{i}{2}n^{-\alpha/2}\sqrt{\lambda}\sqrt{s+t}\left(1+O\left(n^{\alpha-1}\right)\right)
    \nonumber\\
       &=\mp\left[1\pm\frac{i\sqrt{\lambda}}{2}n^{-\alpha/2}\sqrt{s+t}
         -\frac{(s+t)\lambda}{4n^\alpha}+O\left(n^{\alpha/2-1}\right)\right]
\end{align}
This formula will be used to give an estimate of the growth of
generalized eigenvectors of $J$. Note that if $st\ne 4$, then
$\mu_n\to \mp\frac12\sqrt{st}\pm\frac12\sqrt{st-4}$ as $n\to\infty$. Since
this value does not coincide with $\pm 1$, it provides uniformly with
respect to $\lambda$ elliptic (if $st<4$) or hyperbolic (if $st>4$)
behavior of solutions of the formal spectral equation. Therefore the
only case where the value $\lambda$ is essential (producing the unbounded
gap) is the situation where $st=4$.

Now we turn to the proof of the nonnegativity of $J$, modulo compact
operators, i.\,e., the existence of a compact operator $K$ such that
$J+K\ge 0$.
Consider the quadratic form of $J$, viz.,
\begin{equation*}
  (Ju,u)=\sum_{n=1}^\infty(A_{n-1}u_{n-1}+B_nu_n+A_nu_{n+1},u_n)_{\mathcal{K}}\,.
\end{equation*}
Since the vectors $u=\{u_n\}_{n=1}^\infty$ with finitely many nonzero
elements form a core for $J$, for calculating the quadratic form of
$J$ it suffices to calculate it in such
vectors. Write $u_n=d_nv_n$ where $d_n=n^{-\alpha/2}$. Using the identities
\begin{equation*}
  d_nd_{n-1}(n-1)^\alpha=1+O(n^{-1})\,,\quad
  d_nd_{n+1}n^\alpha
  =1+O(n^{-1})
\end{equation*}
one obtains
\begin{align*}
  (Ju,u)&=\sum_{n=1}^\infty\left(
    \begin{pmatrix}
      0&1\\1&0
    \end{pmatrix}
v_{n-1} +
\begin{pmatrix}
  s&0\\0&t
\end{pmatrix}
v_n+
    \begin{pmatrix}
      0&1\\1&0
    \end{pmatrix}
v_{n+1}, v_n
\right)_{\mathcal{K}}\\
&+
\sum_{n=1}^\infty\left(
O(n^{\alpha-1})
u_{n-1} +
O(n^{\alpha-1})
u_{n+1}, u_n
\right)_{\mathcal{K}}
\,.
\end{align*}
The last series in the equality above corresponds to the Jacobi
operator in $\mathcal{H}=l_2(\nats,\complex^2)$ with the subdiagonals
decaying as $O(n^{\alpha-1})$ with $\alpha<1$. Therefore it defines a
compact operator in $\mathcal{H}$. Hence the problem of positivity has
been reduced to the question of positivity of the block Jacobi matrix
$J_c$ defined by constant entries
\begin{equation*}
  A_n:=
  \begin{pmatrix}
    0&1\\1&0
  \end{pmatrix}
\,,\qquad B_n:=
\begin{pmatrix}
  s&0\\0&t
\end{pmatrix}
\,.
\end{equation*}
We prove the following general result for $s,t>0$.
\begin{lemma}
\label{lem:jc-bounded-below}
If $s,t>0$, then
\begin{equation*}
  J_c\ge\frac{st-4}{\frac{t+s}{2}+\left[(\frac{t-s}{2})^2+4\right]^{1/2}}I\,.
\end{equation*}
\end{lemma}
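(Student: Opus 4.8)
The plan is to reduce the operator inequality to a scalar quadratic-form estimate and then close it with a saturated application of the arithmetic--geometric mean inequality. Set
\[
c:=\tfrac{s+t}{2}-\Bigl[\bigl(\tfrac{s-t}{2}\bigr)^2+4\Bigr]^{1/2}.
\]
Rationalizing the surd shows that $c$ coincides with the constant in the statement, because $\bigl(\tfrac{s+t}{2}\bigr)^2-\bigl[(\tfrac{s-t}{2})^2+4\bigr]=st-4$, so that $c=(st-4)/\bigl(\tfrac{s+t}{2}+[(\tfrac{t-s}{2})^2+4]^{1/2}\bigr)$. Thus it suffices to prove $J_c\ge cI$. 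Since $J_c$ has constant bounded entries it is a bounded self-adjoint operator, so I only need the inequality $(J_cu,u)\ge c\norm{u}^2$ on the dense set of finitely supported sequences $u$, which form a core.

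First I would compute the quadratic form explicitly. Writing $u_n=(x_n,y_n)^{\mathsf T}\in\complex^2$ and collecting the two coordinates into sequences $x=\{x_n\}$, $y=\{y_n\}\in l_2(\nats)$, the diagonal part contributes $s\norm{x}^2+t\norm{y}^2$, while the off-diagonal part, because $A_n$ merely interchanges the two coordinates, reduces to a single cross term. Using the representation \eqref{eq:representation-j} and the self-adjointness of $A_n$, the two subdiagonals combine into $2\Re$ of one sum, giving
\[
(J_cu,u)=s\norm{x}^2+t\norm{y}^2+2\Re(Lx,y),\qquad L:=S+S^*.
\]
Here $L$ is the free (zero-diagonal) scalar Jacobi operator on $l_2(\nats)$, and the only property of it I need is the elementary norm bound $\norm{L}\le\norm{S}+\norm{S^*}=2$.

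The decisive step is the constant bookkeeping. From the definition of $c$ one checks directly that $s-c=\tfrac{s-t}{2}+[(\tfrac{s-t}{2})^2+4]^{1/2}>0$ and $t-c=\tfrac{t-s}{2}+[(\tfrac{s-t}{2})^2+4]^{1/2}>0$, and that their product telescopes to the clean identity $(s-c)(t-c)=4$. Combining the norm bound $\abs{2\Re(Lx,y)}\le 2\norm{L}\,\norm{x}\,\norm{y}\le 4\norm{x}\norm{y}$ with the arithmetic--geometric mean inequality,
\[
(s-c)\norm{x}^2+(t-c)\norm{y}^2\ge 2\sqrt{(s-c)(t-c)}\,\norm{x}\norm{y}=4\norm{x}\norm{y},
\]
yields $(J_cu,u)-c\norm{u}^2\ge 0$, which is the claim. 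The part that requires the most care is pinning down the value of $c$: the right constant is exactly the one for which both bounds match, i.e. for which $(s-c)(t-c)=4$ so that the AM--GM estimate is saturated against $\norm{L}\le 2$. Conceptually this $c$ is the bottom of the spectrum of the fibered $2\times 2$ matrix $B+\xi A$ as $\xi$ ranges over $\sigma(L)=[-2,2]$, which also explains why the estimate is sharp and why the choice $st=4$ in the example produces precisely $J_c\ge 0$.
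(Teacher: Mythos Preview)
Your proof is correct and rests on the same idea as the paper's: compute the quadratic form of $J_c$ in the two scalar components and control the cross terms by a weighted AM--GM estimate. The packaging, however, is different. The paper works termwise, bounding each $2\abs{f_n}\abs{g_{n\pm 1}}$ by AM--GM with two free parameters $\epsilon,\eta$, and then solves the optimization $s-\epsilon^{-1}-\eta^{-1}=t-\epsilon-\eta$ to discover the constant. You instead collect the cross terms into $2\Re(Lx,y)$ with $L=S+S^*$, invoke the operator bound $\norm{L}\le 2$ once, and work backward from the target $c$ via the clean identity $(s-c)(t-c)=4$, so that a single AM--GM step lands exactly. Your route is shorter and the spectral remark at the end (that $c$ is the bottom of $\sigma(B+\xi A)$ over $\xi\in[-2,2]$) explains why the bound is sharp; the paper's forward optimization, by contrast, does not presuppose knowing the answer in advance.
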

\begin{proof}
  For $v_n=(f_n,g_n)^{\rm\textsf{T}}$, we have
  \begin{align*}
    (J_c v,v)&=\sum_{n=1}^\infty\left(
      \begin{pmatrix}
        g_{n-1}+sf_n+g_{n+1}\\
        f_{n-1}+tg_n+f_{n+1}
      \end{pmatrix},
\begin{pmatrix}
f_n\\ g_n
\end{pmatrix}
\right)_{\complex^2}\\
 &\ge\sum_{n=1}^\infty
(s\abs{f_n}^2 +
t\abs{g_n}^2 -
2\abs{f_n}\abs{g_{n-1}}-
2\abs{f_{n-1}}\abs{g_n})\\
 &\ge\sum_{n=1}^\infty
(s\abs{f_n}^2 +
t\abs{g_n}^2 -
\frac{1}{\epsilon}\abs{f_n}^2-
\epsilon\abs{g_{n-1}}^2-\eta\abs{g_n}^2-\frac{1}{\eta}\abs{f_{n-1}}^2)
 \end{align*}
for any $\epsilon,\eta>0$. Let us optimize the choice of $\epsilon$ and
$\eta$. Note that the last sum can be written as
\begin{equation*}
  \sum_{n=1}^\infty\left[\left(s-\frac{1}{\epsilon}-\frac{1}{\eta}\right)
\abs{f_n}^2 + \left(t-\epsilon-\eta\right)\abs{g_n}^2\right]\,.
\end{equation*}
Choose $\epsilon,\eta$ so that
\begin{equation}
\label{eq:aux-epsilon-eta-st}
  s-\frac{\epsilon+\eta}{\epsilon\eta}=t-(\epsilon+\eta)\,,
\end{equation}
and define the new variable $k:=\eta\epsilon^{-1}$. Then, for a fixed
$k$, the identity (\ref{eq:aux-epsilon-eta-st}) is equivalent to
\begin{equation*}
  \epsilon^2+(1+k)^{-1}(s-t)\epsilon-\frac{1}{k}=0\,.
\end{equation*}
The positive solution of the last equation is given by
\begin{equation*}
  \epsilon=[2(1+k)]^{-1}(t-s)+ \left[\left\{\frac12(1+k)^{-1}(t-s)\right\}^2+ k^{-1}\right]^{1/2}\,.
\end{equation*}
Since
\begin{equation}
\epsilon\eta=(1+k)\epsilon=\frac{t-s}{2}+
\left[\frac{(t-s)^2}{4}+\frac{(k+1)^2}{k}\right]^{1/2}\,,
\end{equation}
one checks that the minimum of $(1+k)\epsilon$ taken for $k>0$ is
equal to
\begin{equation*}
  \frac{t-s}{2}+
\left[\left(\frac{t-s}{4}\right)^2+4\right]^{1/2}
\end{equation*}
and it is attained for $k=1$. Note that (\ref{eq:aux-epsilon-eta-st})
is also satisfied when
\begin{equation}
\eta=\epsilon=\frac12\left(\frac{t-s}{2}+\left[\left(\frac{t-s}{2}\right)^2
  +4\right]^{1/2}\right)\,.
\end{equation}
Finally, one has
\begin{align*}
  J_c&\ge t-(\epsilon+\eta)=t-2\epsilon=\frac{t+s}{2}-\left[\left(\frac{t-s}{2}\right)^2
  +4\right]^{1/2}\\
&=\left\{\left(\frac{t+s}{2}\right)^2-\left[\left(\frac{t-s}{2}\right)^2
  +4\right]\right\}\left\{\frac{t+s}{2}+\left[\left(\frac{t-s}{2}\right)^2
  +4\right]^{1/2}\right\}^{-1}\\
&=(ts-4)\left\{\frac{t+s}{2}+\left[\left(\frac{t-s}{2}\right)^2
  +4\right]^{1/2}\right\}^{-1}
\end{align*}
\end{proof}

Now, we turn to the asymptotics of the decreasing generalized
eigenvectors of the semi-bounded block Jacobi matrix $J$. Denote by
$\mathcal{B}_\infty:= \lim_{n\to\infty}\mathcal{B}_n$. The following
arguments are heuristic. Let $\{\mu_q(n)\}_{q=1}^4$ be the
eigenvalues of $\mathcal{B}_n$, then, on the basis of a formal
Levinson type formula for $\lambda<0$, one has, for $n_0$ sufficiently large,
\begin{equation}
 \label{eq:asymptotic-heuristic}
  {u_n}^q\simeq\left(\prod_{k=n_0}^n\mu_q(k)\right)e_q\,,
\end{equation}
where $e_q$ are the eigenvectors of $\mathcal{B}_\infty$ (the proof of
similar asymptotic formulae are found in \cite{MR2579689}).

Using (\ref{eq:eigenvalues-transfer-matrix}) and
(\ref{eq:asymptotic-heuristic}), one obtains the following estimate of the
decreasing generalized eigenvectors
\begin{equation*}
  \norm{{u_n}}_{\complex^2}
\le {\rm const.}\prod_{k=n_0}^n\left[1-
\frac{(1-\epsilon)\sqrt{-\lambda(t+s)}}{2k^{\alpha/2}}\right]
\end{equation*}
for arbitrary small $\epsilon>0$ and $n\gg 1$. The last product can be
estimated from above by
\begin{equation*}
  C_{\epsilon_0}\exp\left[-(1-\epsilon_0)\frac{\sqrt{-\lambda(t+s)}}{2\left(1-\frac{\alpha}{2}\right)}n^{1-\alpha/2}\right]
\end{equation*}
for some constant $C_{\epsilon_0}$ and arbitrary
$\epsilon_0>\epsilon$. Since $st=4$, one has $\frac{\sqrt{s+t}}{2}\ge
1$. Thus, one can write
\begin{equation*}
  \norm{u_n}_{\complex^2}\le\exp\left[-(1-\epsilon_0)\frac{\sqrt{-\lambda}}{1-\frac{\alpha}{2}}n^{1-\alpha/2}\right]\,.
\end{equation*}
This estimate and the one obtained rigorously in
Theorem~\ref{thm:infinite-interval-in-spectrum} satisfy
\begin{align*}
  \exp(-\gamma(\lambda)\sum_{k=1}^{m-1}
\phi_\delta(\norm{A_k}_{B(\mathcal{K})}))&\asymp
\exp\left[-(1-\epsilon_0)\sqrt{-\lambda}
\sum_{k=1}^{n-1}\frac{1}{k^{\alpha/2}}\right]\\
&\asymp
\exp\left[-(1-\epsilon_0)\frac{\sqrt{-\lambda}}{1-\frac{\alpha}{2}}n^{1-\alpha/2}\right]\,.
\end{align*}
This formal reasoning shows sharpness of
Theorem~\ref{thm:infinite-interval-in-spectrum}, provided one chooses
$s$ and $t$ arbitrary close to $2$ and therefore
$\frac{1}{2}\sqrt{s+t}$ is arbitrary close to $1$.
\\[5mm]
\noindent\textbf{Section's concluding remarks}
  \begin{enumerate}[i]
  \item Weyl Theorem and the results of this section prove that
    $\sigma_{ess}(J)\subset\reals_+$.
  \item Lemma~\ref{lem:jc-bounded-below} and the decomposition
    $J=J_c+K$ show that $J$ is bounded from below and
    $\sigma(J)\cap\reals_-$ is discrete and can accumulate only at zero.
  \item One can prove that $\sigma_{ess}(J)=\reals_+$ by using the
    formal Levinson type asymptotics of solutions as an Ansatz for
    approximation of Weyl sequences corresponding to each
    $\lambda>0$. This idea is described in detail in \cite{MR2480099}.
  \item Concerning the assumption $st=4$, one can check that for
    $st>4$, $\sigma_{ess}(J)=\emptyset$ and if $st<4$, then
    $\sigma_{ess}(J)=\reals$. This explains the role of the condition
    $st=4$. In the case $s=t$, the matrix $J$ can be written as
    an orthogonal sum of
    two (unitarily equivalent) scalar Jacobi matrices by diagonalizing
    the matrix $\begin{pmatrix} 0&1\\ 1&0
    \end{pmatrix}$. In that case, the above results for $st>4$ and
    $st<4$ follow directly from \cite{MR1911189}. Moreover the result
    of (iii) for $s=t=2$ immediately follows from
    \cite{MR2550697}. Finally note that $J$, our class of block Jacobi
    matrices depending on parameters $s,t$, exhibits a spectral phase
    transition phenomenon of first kind (see \cite{MR1911189}) with
    the threshold corresponding to the condition $st=4$.
  \end{enumerate}

\section*{Acknowledgements}

JJ and SN have been supported by the National Science Centre - Poland, grant
no. 2013/09/B/ST1/04319. SN was also supported by grant RFBR 16-01-00443-a.
LOS has been supported by UNAM-DGAPA-PAPIIT IN110818 and
SEP-CONACYT CB-2015 254062. The authors thank the anonymous referee
whose pertinent comments led to an improved presentation of this work.

\def\cprime{$'$} \def\lfhook#1{\setbox0=\hbox{#1}{\ooalign{\hidewidth
  \lower1.5ex\hbox{'}\hidewidth\crcr\unhbox0}}}

\end{document}